\newcommand{\R}{\mathbb R}
\newcommand{\cQ}{\mathcal Q}
\newcommand{\cE}{\mathcal E}
\newcommand{\tr}{{\rm Tr\;}}
\newcommand{\be}{\begin{equation}}
\newcommand{\ee}{\end{equation}}
\newcommand{\al}{\alpha}
\newcommand{\wt}{\widetilde}
\newtheorem{theorem}{Theorem}[section]
\newtheorem{lemma}[theorem]{Lemma}
\numberwithin{equation}{section}
\DeclareMathOperator{\supp}{supp}
\DeclareMathOperator{\Spec}{Spec}
\title{Relativistic Scott correction in self-generated magnetic fields}
\author{L\'aszl\'o Erd\H os
 \thanks{Partially supported by SFB-TR12 of
the German Science Foundation. {\text lerdos@math.lmu.de} }
\\Institute of Mathematics, University of Munich \\
Theresienstr. 39, D-80333 Munich, Germany \\
S\o ren Fournais \thanks{Work partially supported by the Lundbeck
  Foundation, the Danish Natural Science Research Council and the European 
Research Council under the
 European Community's Seventh Framework Program (FP7/2007--2013)/ERC grant
 agreement  202859.
{\text fournais@imf.au.dk}} \\ Department of Mathematical Sciences, Aarhus University\\
 Ny Munkegade 118, DK-8000 Aarhus, Denmark
\\ and \\
Jan Philip Solovej \thanks{Work partially supported
   by the Danish Natural Science Research Council and by a Mercator
   Guest Professorship from the German Science Foundation. {\text
solovej@math.ku.dk}}
\\ Department of Mathematics, University of Copenhagen\\
Universitetsparken 5, DK-2100 Copenhagen,
Denmark}
\begin{document}

\date{January 31, 2012}
\maketitle

\centerline{\it Dedicated to the 80-th birthday of Elliott H. Lieb}

\begin{abstract}
We consider a large neutral molecule with total nuclear charge $Z$ in a model
 with self-generated classical magnetic field and where the 
kinetic energy of the electrons is treated relativistically.
To ensure stability, we assume that $Z \alpha < 2/\pi$, where 
$\alpha$ denotes the fine structure constant. We are interested in
the ground state energy in the simultaneous
 limit $Z \rightarrow \infty$, $\alpha \rightarrow 0$ such that $\kappa=Z \alpha$ is fixed.
The leading term in the energy asymptotics is independent of $\kappa$, it is
given by the Thomas-Fermi energy
of order $Z^{7/3}$ and it is unchanged by including 
the self-generated magnetic field.
We prove 
the first correction term  to this energy, the so-called Scott correction
of the form $S(\alpha Z) Z^2$.
The current paper extends the result of \cite{SSS} on
the Scott correction for relativistic molecules to include a self-generated
magnetic field. Furthermore, we show that the corresponding Scott correction
function $S$, first identified in \cite{SSS}, is unchanged
by including a magnetic field. We also prove new Lieb-Thirring inequalities
for the relativistic kinetic energy with magnetic fields.
\end{abstract}

\bigskip\noindent
{\bf AMS 2010 Subject Classification:} 35P15, 81Q10, 81Q20

\medskip\noindent
{\it Key words:} Relativistic Pauli operator, semiclassical asymptotics, magnetic field

\medskip\noindent
{\it Running title:} Relativistic Scott correction


\section{Introduction and results}

We consider a relativistic model of a molecule
in three dimensions, 
where the kinetic energy of the electrons is modelled by the square root of the Pauli operator. 
The nuclei are fixed at positions ${\bf R}=(R_1,\ldots, R_M)$ and 
have charges ${\bf Z} = (Z_1,\ldots, Z_M)$, $Z_k>0$. Let $Z = \sum_{j=k}^M Z_k$
be the total nuclear charge. For simplicity we consider a neutral molecule, i.e.
the number of electrons $N$ is set to be equal to the total nuclear charge, $N=Z$.
The particles are subject to Coulomb interaction and the electrons are dynamical.
The kinetic energy operator of a single electron is
\begin{align}
{\mathcal T}^{(\al)}(A):=  \sqrt{ \alpha^{-2} T(A)+ \alpha^{-4}} - \alpha^{-2},
\end{align}
where $\al>0$ is a parameter (fine structure constant)
and $T(A)$ is the  non-relativistic kinetic energy operator
given by 
\begin{align}\label{Tdef}
T(A): =\begin{cases}
 [ \sigma \cdot (-i\nabla+A)]^2 & \text{ (Pauli) }\\
 (-i \nabla+A)^2 & \text{ (Schr\"{o}dinger).}
\end{cases}
\end{align}
Here $A$ is the magnetic vector potential
generating the magnetic field $B=\nabla\times A$ and $\sigma$ is the vector of 
the three Pauli matrices.
Note that in the  $\alpha\to 0$ limit ${\mathcal T}^{(\al)}(A)$
is replaced with $\frac{1}{2} T(A)$, its non-relativistic counterpart.
We will treat the Pauli case (with spin-$\frac{1}{2}$) and the spinless
Schr\"odinger case in parallel. For simplicity, we write the proofs
for the more difficult Pauli case; the necessary modifications
for the Schr\"odinger case are straight-forward and left to the reader.
 The operator $T(A)$ acts on $L^2({\mathbb R}^3, {\mathbb C}^2)$; in the Schr\"odinger case $T(A)$ is diagonal in the spin variables.

The Hamiltonian of the molecule is 
\begin{align}
\label{eq:H}
H({\bf Z},{\bf R}, \alpha, A) &:= \sum_{j=1}^Z \Big({\mathcal T}_j^{(\al)}(A) -
 \sum_{k=1}^M \frac{Z_k}{|x_j-R_k|}\Big)+ \sum_{j<k} \frac{1}{|x_j-x_k|} ,
\end{align}
where ${\mathcal T}_j^{(\al)}(A)$ acts in the Hilbert space  of the
$j$-th electron. The Hilbert space for the whole system is
$$
{\mathcal H} = \bigwedge_{j=1}^Z L^2({\mathbb R}^3, {\mathbb C}^2).
$$

Our units are $\hbar^2(me^2)^{-1}$ for the length, $me^4\hbar^{-2}$
for the energy and $mec\hbar^{-1}$ for the magnetic vector
potential, where $m$ is the electron mass, $e$ is the electron charge and
$\hbar$ is the Planck constant.
 In these units, the only physical parameter
that  appears in the total  Hamiltonian \eqref{eq:H}
 is the dimensionless fine structure constant $\al=e^2(\hbar c)^{-1} \sim \frac{1}{137}$.
It is known that $\max_k Z_k\alpha \le 2/\pi$ is necessary for   the stability of the system, even
without magnetic field ($A=0$). In this paper
we will assume that $\max_k Z_k\alpha < 2/\pi$ and we will investigate the simultaneous limit $Z\to \infty$, $\alpha\to 0$.

For a given vector potential $A$, the ground state energy of the electrons is given by
\begin{align}
E_0({\bf Z},{\bf R}, \alpha, A) &:=  \inf \Spec H({\bf Z},{\bf R},\alpha, A).
\end{align}
The total energy with a self-generated magnetic field is obtained by adding the field energy 
$(8\pi)^{-1} \alpha^{-2}\int|\nabla \times A|^2$ and minimizing over all vector potentials,
\begin{align}\label{infA}
E_0({\bf Z},{\bf R},\alpha) &:= \inf_{A} \Big\{E_0({\bf Z},{\bf R},\alpha, A) +
 \frac{1}{8\pi \alpha^2} \int_{ {\mathbb R}^3} |\nabla \times A|^2 \Big\}.
\end{align}
Since the magnetic energy  will always be finite, we can also  assume 
that $A\in L^6({\mathbb R}^3)$ (see Appendix of \cite{FLL} for the existence
of such a gauge), and we thus have 
\be 
   \nabla\cdot A =0,  \qquad C^{-1} \Big(\int_{{\mathbb R}^3}  A^6\Big)^{1/3} 
\le  \int_{{\mathbb R}^3} |\nabla \otimes A|^2=
 \int_{{\mathbb R}^3} |\nabla\times A|^2
\label{gauge}
\ee
by the Sobolev inequality, 
where $|\nabla \otimes A|^2 = \sum_{i,j=1}^3 |\partial_i A_j|^2$. 
We will call a vector potential $A$ {\it admissible} if
$A\in L^6({\mathbb R}^3)$, $\nabla \otimes A \in L^2({\mathbb R}^3),$ and $\nabla\cdot A=0$. 
Thus \eqref{infA} can be reformulated as
\begin{align}\label{infA1}
E_0({\bf Z},{\bf R},\alpha) &= \inf_{A} \Big\{E_0({\bf Z},{\bf R},\alpha, A) +
 \frac{1}{8\pi \alpha^2} \int_{ {\mathbb R}^3} |\nabla \otimes A|^2 \Big\},
\end{align}
where the minimization is taken over  all admissible $A$.

\medskip

The main question is the ground state energy in the large $Z$ limit.
The answer depends on whether relativistic or non-relativistic
models are considered and whether magnetic fields are included or not.

In the {\it non-relativistic case without magnetic field} $(A=0)$
the ground state energy to leading term 
 is of order $Z^{7/3}$ and it is given by the Thomas-Fermi theory \cite{LS}.
The next order term, known as the Scott correction, 
is of order $Z^2$ and  it is explicitly given by
\be
     2\cdot \frac{1}{4}\sum_{k=1}^M Z_k^2  
\label{Z2}
\ee
(the additional factor $2$ is due to the spin degeneracy)
and it was rigorously proved for atoms in \cite{H, SW1}
and for molecules in \cite{IS}, see also \cite{SS}.

The ground state energy of the {\it relativistic molecule  without magnetic field} 
up to subleading order (Scott correction) has 
been studied in \cite{SSS} (an alternative proof  for
the special case of atoms, $M=1$, was given in \cite{FSW1}):

\begin{theorem}[Non-magnetic relativistic Scott correction \cite{SSS}] \label{theoremSSS} ~\\
  Let ${\bf z}=(z_1,\ldots,z_M)$ with $z_1,\ldots,z_M>0$,
  $\sum_{k=1}^M z_k=1$, and ${\bf r}=(r_1,\ldots,r_M)\in\R^{3M}$ with
  $\min_{k\ne\ell}|r_k-r_\ell|>r_0$ for some $r_0>0$ be given. Define
  ${\bf Z}=(Z_1,\ldots,Z_M)=Z{\bf z}$ and ${\bf R}=Z^{-1/3}{\bf
    r}$. Then there exist a constant $E^{\rm TF}({\bf z},{\bf r})$ and
  a universal (independent of ${\bf z}$, ${\bf r}$ and $M$)
  continuous, non-increasing function $S_2:[0,2/\pi]\to\R$ with
  $S_2(0)=1/4$ such that as $Z=\sum_{k=1}^M Z_k\to\infty$ and
  $\alpha\to0$ with $\max_k\{Z_k\alpha\}\leq 2/\pi$ we have
\begin{equation} \label{main expansion}
   E_0({\bf Z},{\bf R};\alpha, A=0) = Z^{7/3}E^{\rm TF} ({\bf z},{\bf r}) +
   2\sum_{k=1}^{M}Z_k^2 S_2(Z_k\alpha)
  + {\mathcal O}(Z^{2-1/30})\,.
\end{equation}
The implicit constant in the error term depends only on $M$ and $r_0$.
\end{theorem}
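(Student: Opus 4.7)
The overall plan is a classical upper bound/lower bound matching strategy, with the key new ingredient being a careful treatment of the relativistic one-body problem near each nucleus. The leading $Z^{7/3}$ term is handled by Thomas--Fermi theory (already set up for the non-relativistic Coulomb problem in \cite{LS}), so the real issue is isolating the $Z^2$ correction. The natural scaling is to note that in a ball of radius $Z^{-1/3}$ (the TF scale) the electron density is of order $Z^2$, while inside a ball of radius $Z^{-1}$ around each nucleus (the inner, ``Scott'' scale) the effective one-particle operator is, after rescaling $x\to Z^{-1}x$, essentially the relativistic hydrogen operator $\sqrt{-\Delta + \alpha^{-2}} - \alpha^{-2} - \kappa_k/|x|$ with $\kappa_k = Z_k\alpha$ fixed. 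The function $S_2(\kappa)$ is then \emph{defined} as the renormalised sum of negative eigenvalues of this operator minus its semiclassical (Weyl) counterpart, so that $S_2(0) = 1/4$ by the Hydberg--Heisenberg computation that underlies \eqref{Z2}.

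For the \emph{upper bound} I would construct a trial one-particle density matrix $\gamma$ as a sum of a TF piece $\gamma^{\rm TF}$ built from coherent states against the TF density away from the nuclei and, near each nucleus $R_k$, the projection $\Pi_k$ onto all negative-energy eigenstates of the rescaled relativistic hydrogen Hamiltonian cut off at an intermediate scale $Z^{-1/3+\epsilon}$. Smooth IMS-type partition functions $\theta_k$ interpolate between the two. Using the Lieb--Oxford inequality to bound the direct/exchange electron-electron repulsion and the TF equation to turn the kinetic--TF-potential term into the TF energy, plus the explicit inner contribution from $\Pi_k$, gives the stated asymptotics as an upper bound, provided the kinetic energy localisation error for the non-local operator $\sqrt{\alpha^{-2}T(A)+\alpha^{-4}}$ is controlled -- this is where the standard IMS identity fails and must be replaced by a resolvent or Kato-square-root type argument.

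For the \emph{lower bound}, first use a correlation estimate (Lieb--Oxford combined with the TF variational principle in the form of \cite{LS}) to reduce the many-body problem to a one-particle problem with effective potential $V^{\rm TF}$. The one-body bound then splits into a semiclassical piece, handled by a Weyl-type asymptotic using coherent states adapted to the square-root kinetic energy, and a singular piece near each nucleus. For the singular piece one replaces $V^{\rm TF}$ by $-Z_k/|x-R_k|$ plus an error (the screening correction is smooth on the TF scale and can be absorbed into the semiclassical error), rescales, and observes that the resulting operator's negative spectrum gives exactly the Scott term $-\sum_k 2Z_k^2 S_2(Z_k\alpha)$ up to controllable remainders. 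A Lieb--Thirring-type bound for $\sqrt{T(A)+1}-1 - V$ (the non-magnetic version, $A=0$, suffices here) is needed to dispose of the ``intermediate'' region.

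The main obstacle, in my view, is the non-locality of the square-root operator combined with the criticality $\kappa \leq 2/\pi$. Two technical difficulties arise: (i) IMS localisation has to be replaced by a non-local version that produces a controlled error when cutting the spatial regions; this is delicate because even the form domain changes as $\kappa \uparrow 2/\pi$. (ii) Defining $S_2(\kappa)$ as a renormalised sum requires showing that the renormalisation (subtracting the Weyl term) converges uniformly in $\kappa$ on $[0,2/\pi]$, which demands sharp estimates on high eigenvalues of the critical operator. The residual error $\mathcal{O}(Z^{2-1/30})$ arises from optimising the intermediate scale $Z^{-1/3+\epsilon}$ against these localisation and semiclassical errors; matching the exponent $1/30$ amounts to a bookkeeping exercise once the basic Lieb--Thirring inputs are in place.
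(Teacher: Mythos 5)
The paper does not prove Theorem~\ref{theoremSSS}: it states it as a reference result and cites \cite{SSS} for the proof, using it both to supply the definition of the Scott function $S_2$ and as the non-magnetic benchmark to be matched by the magnetic version. That said, your outline is broadly faithful to the \cite{SSS} strategy: a correlation/Lieb--Oxford reduction to a one-body problem with the Thomas--Fermi potential, a split into an inner Scott region, an intermediate semiclassical region, and an outer region, with $S_2(\kappa)$ realized as a renormalized sum of negative eigenvalues of a relativistic hydrogen operator against its Weyl counterpart, and an upper-bound trial density matrix obtained by gluing a coherent-state Thomas--Fermi piece to relativistic hydrogen projectors near each nucleus. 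Where your sketch departs materially from \cite{SSS} is the localization of the square-root kinetic energy: you propose a resolvent or Kato-square-root substitute for IMS, whereas \cite{SSS} actually uses an explicit localization formula built from the relativistic heat kernel (their Theorem~2.5). The present paper stresses this point precisely because that heat-kernel formula fails once magnetic fields are introduced, which is what forces it to devise yet another scheme (operator monotonicity of $\sqrt{\cdot}$ combined with the pull-out estimate of Lemma~\ref{lem:Pull-out}). Two smaller points: $S_2(0)=1/4$ in \cite{SSS} follows from a continuity/limiting argument tying the relativistic problem to the non-relativistic hydrogen computation, not from a closed-form ``Hydberg--Heisenberg computation'' (this is not standard terminology and appears to be a slip); and your instinct that uniformity of the renormalization in $\kappa$ up to the critical value $2/\pi$ is the delicate spot is correct, since the form domain of the one-body operator degenerates there. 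As a strategic outline your proposal is sound, but the localization step is a genuine missing idea: it would need to be filled in with the heat-kernel machinery of \cite{SSS} or an equivalent quantitative substitute before the explicit $\mathcal{O}(Z^{2-1/30})$ error term could be claimed.
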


In the recent paper \cite{EFS3}  (see also \cite{EFS1} and \cite{EFS2})
the Scott correction for a {\it non-relativistic} molecule
 in the presence of a self-generated magnetic field was proved and shown to be of the
 form $S_1(\alpha^2 Z) Z^2$, i.e to depend on $\alpha$ through the combination $\alpha^2 Z$
 (see also \cite{Iv, Iv1, Iv2} for an alternative derivation).
 This parameter, $\alpha^2 Z$, is also the parameter which in non-relativistic molecules
 with self-generated magnetic field has to be small to ensure stability.
Note that the physical units chosen in \cite{EFS3} differ by a factor 2
from the choice we made in this paper,  in particular the non-relativistic
$\al\to0$
limit of $T(A)$ is $-\frac{1}{2}\Delta$ in the current units,
so the Thomas-Fermi energy is modified compared with  \cite{EFS3}.
Moreover, the notation for
the Scott function $S_1$ incorporates the $8\pi$ factor explicitly appeared
in (1.8) of \cite{EFS3}. The notations in the current paper follow
the conventions of \cite{SSS}.

In the light of these previous results, it is natural to ask the following questions 
for relativistic molecules with self-generated field.
\begin{enumerate}
\item (Existence of Scott term)\\ Is it true that there exists a function $S_3$ such that
\begin{align}
E_0({\bf Z},{\bf R},\alpha)  = Z^{7/3}E^{\rm TF} ({\bf z},{\bf r}) +
   2\sum_{k=1}^{M}Z_k^2 S_3(Z_k\alpha) + o(Z^2),  
\end{align}
in the simultaneous limit $Z \rightarrow \infty$, $\alpha\to0$
with $\kappa = Z \alpha$ fixed, 
for any $\kappa$ small?
\item (The Scott term is non-magnetic)\\
Is it true that $S_2 = S_3$, i.e. the Scott term with self-generated 
magnetic field is the same 
as for the non-magnetic operator ($A=0$)?
\end{enumerate}

The following main theorem of this paper gives an affirmative answer to these questions:

\begin{theorem}[Relativistic Scott correction with self-generated field]
 \label{maintheorem} ~\\
Let the assumptions and notations be as in Theorem~\ref{theoremSSS},
in particular we  fix $M$, ${\mathbf z}$ and  ${\mathbf r}$.
Assume furthermore that there exists $\kappa_0< 2/\pi$ such that
\begin{align}
\max_k\{ Z_k \alpha \} \le \kappa_0.
\end{align}
Then the ground state energy with self-generated magnetic field is given by
\begin{equation} \label{mainselfgen}
   E_0({\bf Z},{\bf R};\alpha) = Z^{7/3}E^{\rm TF} ({\bf z},{\bf r}) +
   2\sum_{k=1}^{M}Z_k^2 S_2(Z_k\alpha)
  + o(Z^{2})\,
\end{equation}
in the limit as $Z\to\infty$ and $\alpha\to 0$. 
\end{theorem}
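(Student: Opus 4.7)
The strategy is to prove matching upper and lower bounds for \eqref{mainselfgen} with error $o(Z^2)$. The upper bound is nearly immediate: choosing the trial field $A=0$ in the variational problem \eqref{infA1}, one obtains
\[
E_0({\bf Z},{\bf R},\alpha) \;\le\; E_0({\bf Z},{\bf R},\alpha, A=0) \;=\; Z^{7/3}E^{\rm TF}({\bf z},{\bf r}) + 2\sum_k Z_k^2 S_2(Z_k\alpha) + \mathcal{O}(Z^{2-1/30})
\]
by Theorem~\ref{theoremSSS}. The entire content of the theorem therefore lies in the matching lower bound: one must show that no choice of admissible vector potential $A$ can decrease the many-body energy by more than $o(Z^2)$ below this value.

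For the lower bound I would follow the scheme developed in \cite{SSS} and \cite{EFS3}. First, use the Lieb--Oxford inequality together with the molecular Thomas--Fermi potential $\Phi^{\rm TF}$ to reduce the many-body lower bound to a one-body semiclassical problem: bound the electron repulsion below by a mean field so that the many-body Hamiltonian is controlled by $\sum_j \bigl(\mathcal{T}^{(\alpha)}_j(A) - \Phi^{\rm TF}(x_j)\bigr)$ plus a scalar correction, and estimate the resulting sum of negative eigenvalues via a magnetic Lieb--Thirring inequality that also keeps track of the field energy $(8\pi\alpha^2)^{-1}\int |\nabla\otimes A|^2$. Second, perform an IMS-type localization separating a bulk Thomas--Fermi region from balls $B(R_k, Z^{-1/3+\delta})$ around each nucleus. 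In the bulk region, the new relativistic magnetic Lieb--Thirring inequalities promised in the abstract, combined with the Thomas--Fermi variational principle, yield the TF energy up to $o(Z^2)$ uniformly over admissible $A$; in particular the field energy cost suppresses any nontrivial magnetic contribution away from the nuclei. Third, in each nuclear ball, rescale by the Scott length $\sim Z_k^{-1}$ so that the problem becomes a one-atom relativistic Scott problem with fixed relativistic coupling $\kappa = Z_k\alpha \le \kappa_0 < 2/\pi$, in which the local contribution is the infimum over all admissible vector potentials of the atomic Scott functional. This infimum defines $S_3(Z_k\alpha)$.

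The main obstacle is to show that this atomic infimum equals $S_2(Z_k\alpha)$, i.e.\ that the self-generated magnetic field produces no contribution at the Scott order. The inequality $S_3(\kappa) \le S_2(\kappa)$ is again trivial by the choice $A=0$. For the reverse inequality one must establish a quantitative stability estimate: any admissible $A$ that decreases the atomic electronic energy by $\varepsilon Z_k^2$ must cost at least $\varepsilon Z_k^2 + o(Z_k^2)$ in field energy, so that the optimal field is asymptotically trivial. The technical heart of this step is a new Lieb--Thirring-type inequality for $\mathcal{T}^{(\alpha)}(A) - Z_k/|x|$ valid uniformly in the sharp relativistic regime $Z_k\alpha < 2/\pi$ and incorporating the magnetic energy as a corrective term; such an inequality is nontrivial because the Coulomb singularity and the magnetic interaction act on the same short length scale, and the Kato--Herbst sharp constant $2/\pi$ must be preserved under any localization and gauge manipulation. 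Once these magnetic Lieb--Thirring inequalities are established, the lower bound follows by assembling the localized contributions, passing to the limit $\alpha \to 0$ with $Z_k\alpha$ fixed, and invoking Theorem~\ref{theoremSSS} for the zero-field atomic Scott energy.
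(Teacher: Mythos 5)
Your high-level outline matches the paper's strategy: a trivial upper bound via $A=0$, then a lower bound obtained by a mean-field reduction, localization into a Scott region around each nucleus, a semiclassical bulk, and an outer region, using new relativistic magnetic Lieb--Thirring inequalities along the way. But the proposal misses the central technical point that makes the localization work at all, and it replaces the paper's actual mechanism for killing the magnetic contribution with a more delicate (and harder to prove) claim that is not what the paper establishes.

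First, the localization: the non-magnetic proof in \cite{SSS} localizes the operator $\sqrt{\alpha^{-2}T+\alpha^{-4}}-\alpha^{-2}$ via an explicit formula for the relativistic heat kernel. That tool is unavailable once $A\ne 0$. The paper's replacement is to apply IMS to $T_h(A)$ \emph{under} the square root, use operator monotonicity of $\sqrt{\cdot}$, and then pull the localization functions out of the square root via Lemma~\ref{lem:Pull-out}, constantly readjusting the parameter $\beta$ to absorb the $h^2/\ell^2$ localization errors. This is the key new ingredient that your sketch does not identify, and without some substitute for it the IMS-type localization you invoke would not even be well-defined for $\mathcal{T}^{(\alpha)}(A)$. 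Second, the elimination of the magnetic field in the Scott region: you propose a quantitative stability estimate of the form ``any $A$ that lowers the electronic energy by $\varepsilon Z_k^2$ must cost at least $\varepsilon Z_k^2 + o(Z_k^2)$ in field energy,'' i.e.\ a delicate balance of comparable quantities. What the paper actually does is much more robust and does not require such fine cancellation: after rescaling the nuclear ball to the Scott length, the coefficient $\Lambda$ in front of $\int|\nabla\otimes A|^2$ diverges like $h^{-c}$ while the ball radius grows much more slowly (so that $R^5/\Lambda\to 0$). An a priori bound from Theorem~\ref{thm:CritStability} then forces $\int|\nabla\otimes A|^2 \le CR^3/\Lambda\to 0$ for any competitive $A$, after which the field is removed via Sobolev/CLR, giving $\mathcal{E}_{R,\alpha,\Lambda}(A)\to 2S_2(\alpha)$ (Lemma~\ref{lem:ScottFunction}). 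This is domination, not a near-tight energy balance. Finally, you ask for the magnetic Lieb--Thirring inequality to hold ``uniformly in the sharp relativistic regime $Z_k\alpha<2/\pi$''; note the paper's Theorem~\ref{thm:CritStability} has constants that deteriorate as $\beta\uparrow 2/\pi$, which is exactly why the final result requires $\kappa_0<2/\pi$ strictly and does not reach the critical coupling.
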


In contrast to the non-relativistic case \cite{EFS3}, the Scott correction is
non-magnetic in the relativistic case. The reason is that
the prefactor $(8\pi \alpha^2)^{-1}$ in front of the magnetic energy
is of order $Z^2$ in the relativistic case (since $Z\alpha$ is bounded),
 i.e. it is much larger than in the non-relativistic case (when  $Z\alpha^2$ was bounded).
Therefore the self-generated magnetic field is much smaller in the relativistic
case and it eventually does not influence the kinetic energy up to the
order of the Scott term. 
In fact,  our proof gives a somewhat stronger
result; it proves that Theorem~\ref{maintheorem} also holds 
if the constant $8\pi$ in \eqref{infA1} is replaced with any
fixed positive finite number.

\medskip

We now comment on the new ingredients of the proof.
Since the magnetic field is not expected to influence the final
result, we can treat it perturbatively. To control this perturbation, our main  tools
are:  i) a new magnetic Lieb-Thirring type inequality for the relativistic 
case; and ii) a new localization scheme for the kinetic energy operator
${\mathcal T}^{(\alpha)}(A)$.

The magnetic Lieb-Thirring inequality for the non-relativistic Pauli
operator has been proven in \cite{LLS}, while the Daubechies inequality
handles the relativistic case without magnetic field \cite{Dau}.
Our Theorem~\ref{thm:LT} combines and generalizes these two classical inequalities.
We also need a modified version of this result that allows us to include
Coulomb singularities with subcritical coupling constants (Theorem~\ref{thm:CritStability}).
We remark  that magnetic fields have been incorporated
into the Daubechies inequality even with the critical Coulomb singularity \cite{FLS},
but this  result concerns only  the
  Schr\"odinger case \cite{FLS} where diamagnetic techniques are available.

The main localization formula used in \cite{SSS}
 (Theorem 2.5) is not applicable with a magnetic field since 
it relies on the explicit formula for the relativistic heat kernel.
Instead, we use the usual IMS formula under the square root,
then apply the operator-monotonicity of the square root function
and a  useful ``Pull-out'' inequality (Lemma~\ref{lem:Pull-out}).
Constantly adjusting the parameter $\alpha$ in ${\mathcal T}^{(\alpha)}(A)$,
we can show that the localization errors can be controlled
essentially as effectively as in \cite{SSS} despite the lack
of any explicit formula.

\section{Structure of the proof}

The main steps of the proof of Theorem~\ref{maintheorem} follow 
the proof  of Theorem~\ref{theoremSSS} given in \cite{SSS}. To avoid
unnecessary repetitions, we will sometimes explicitly refer to certain
lemmas from  \cite{SSS}, but otherwise we keep the current
paper self-contained.  We will focus 
on the modifications needed in order to accomodate the self-generated magnetic field.

\medskip

We consider the number of nuclei $M$ and the minimal distance $r_0$ among
the rescaled nuclear centers to be fixed throughout the proof and
every generic constant denoted by $C$ in the sequel may depend on them.
The notations 
$$
  [a]_+:=\max\{0, a\}\ge 0, \qquad  [a]_-:=\min \{0,a\}\le 0
$$
 stand
for the positive and  negative parts of a real number or a self-adjoint operator $a$.
Integrals with unspecified integration domain are always considered on ${\mathbb R}^3$.

The upper bound in \eqref{mainselfgen} follows from \eqref{main expansion} by choosing $A=0$
in \eqref{infA}.
So we only need to consider the lower bound.

\subsection{Passage to the mean field Thomas-Fermi theory}

We will use the Thomas-Fermi theory for non-relativistic molecules
 without magnetic field \cite{LS}.  We will not introduce this theory here
in details, we refer the reader
to Section 2.7 of \cite{SSS} whose notation we follow. In particular,
let $V^{TF}_{{\mathbf Z}, {\mathbf R}}(x)=V^{TF}({\mathbf Z}, {\mathbf R}, x)$ 
be the  Thomas-Fermi potential
and $\rho^{TF}_{{\mathbf Z}, {\mathbf R}}(x)=\rho^{TF}({\mathbf Z}, {\mathbf R}, x) $
 the corresponding Thomas-Fermi density and let
$$
   D(f,g) = \frac{1}{2}\int\!\!\int \frac{\overline{f(x)} g(y)}{|x-y|} dx dy, \qquad D(f):= D(f,f).
$$

Define the functions
\begin{align}\label{ddef}
d_{{\bf r}}(x) &= \min_{k=1,\ldots,M} \{|x-r_k|\}, \\
 d_{{\bf R}}(x) &= \min_{k=1,\ldots,M} \{|x-R_k|\} = Z^{-1/3} d_{{\bf r}}(Z^{1/3}x).
\end{align}
The Thomas-Fermi potential $V^{TF}_{{\mathbf z}, {\mathbf r}}(x)$ satisfies the
following bounds for all multi-indices $n\in {\mathbb N}^3$ and all $x$
with $d_{{\bf r}}(x)\ne 0$:
\begin{align}\label{VTFbound}
  \big|\partial_x^n V^{TF}_{{\mathbf z}, {\mathbf r}}(x)\big| \le C^*_n \min\{ d_{{\bf r}}(x)^{-1}, d_{{\bf r}}(x)^{-4}\}
 d_{{\bf r}}(x)^{-|n|},
\end{align}
and we also have
\begin{align}\label{VTFbound1}
  \Big| V^{TF}_{{\mathbf z}, {\mathbf r}}(x) -\frac{z_k}{|x-r_k|}\Big| \le C^*, \quad \mbox{for}\quad 
|x-r_k|\le r_0/2, \quad k=1,2,\ldots M,
\end{align}
where the constants $C^*_n$ and $C^*$ depend only on  $r_0$, $M$ and $\max \{ Z_1, Z_2, \ldots,Z_M\}$
(see Theorem 2.12 and Remark 2.14 in \cite{SSS}).

We get from the correlation estimate \cite[Theorem~2.9 (see also calculation on p. 55)]{SSS}
 that if $\psi \in {\mathcal H}$ is normalized, then
\begin{align}\label{eq:corr}
\langle \psi, H({\mathbf Z}, {\mathbf R}, \alpha, A) \psi \rangle 
&\geq
\tr\big[ {\mathcal T}^{(\al)}(A) - V^{TF}_{{\mathbf Z}, {\mathbf R}}(x)
 - C Z^{3/2} s g(x) \big]_{-}\nonumber \\
&\quad- D(\rho^{TF}_{ {\mathbf Z}, {\mathbf R}}) - Cs Z^{8/3} - C s^{-1} Z,
\end{align}
where the parameter $s$ is chosen as 
$$
 s=Z^{-5/6}
$$
 and 
\begin{align}
g(x) = \begin{cases}
(2s)^{-1/2}, & d_{{\mathbf R}}(x) < 2s,\\
 d_{{\mathbf R}}(x)^{-1/2}, & 2s \leq  d_{{\mathbf R}}(x) \leq Z^{-1/3},\\
 0,  &Z^{-1/3}< d_{{\mathbf R}}(x). 
\end{cases}
\end{align}
With the above choice of $s$, we have $s Z^{8/3}=  s^{-1} Z = Z^{11/6}$,
so the last two terms in \eqref{eq:corr} are negligible to the order $o(Z^2)$
we are interested.

We will estimate the error term $C Z^{3/2} s g(x) = CZ^{2/3}g(x)$ by borrowing 
a small $\delta$-part 
 of the kinetic energy and using the Lieb-Thirring inequality, Theorem~\ref{thm:LT} below
(with $h=1$ and $\beta=\alpha$). 
With the choice of  $\delta = Z^{-1/2}$, using  $\alpha \leq C Z^{-1}$
and computing $\int g^{5/2}\le CZ^{-7/12}$ and $\int g^4 \le CZ^{-1/3}$, we get
\begin{align}
\tr\Big[ \delta {\mathcal T}^{(\al)}(A) - C Z^{3/2} s g(x) 
\Big]_{-} 
\geq
- C Z^{11/6} - C Z^{7/12} \Big( \int |\nabla \times A|^2 \Big)^{3/4}.
\end{align}
Inserting these estimates in \eqref{eq:corr}
and using that $\tr[X+Y]_-\ge \tr [X]_-+ \tr[Y]_-$ for self-adjoint operators $X$, $Y$,
 we get for any $A$ and any normalized $\psi \in {\mathcal H}$,
\begin{align}\label{eq:corr2}
\langle \psi, H({\mathbf Z}, {\mathbf R}, \alpha, A) \psi \rangle
&\geq
\tr\big[ (1- Z^{-1/2}) {\mathcal T}^{(\al)}(A)  - V^{TF}_{{\mathbf Z}, {\mathbf R}}  \big]_{-}\nonumber \\
&\quad- D(\rho^{TF}_{{\mathbf Z}, {\mathbf R}}) - C Z^{11/6} - C Z^{7/12} \Big( \int |\nabla \times A|^2
 \Big)^{3/4} \nonumber \\
&\geq \tr\big[ (1- Z^{-1/2}) {\mathcal T}^{(\al)}(A) - 
V^{TF}_{{\mathbf Z}, {\mathbf R}}  \big]_{-}\nonumber \\
&\quad- D(\rho^{TF}_{{\mathbf Z}, {\mathbf R}}) - C (Z^{11/6} + Z^{7/3} \alpha^6)
 - \frac{1}{16 \pi \alpha^2} \int |\nabla \times A|^2.
\end{align}
So, using $Z \alpha \le C$,
\begin{align}\label{eq:corr3}
\langle \psi, H({\mathbf Z}, {\mathbf R}, \alpha, A) \psi \rangle+
\frac{1}{8 \pi \alpha^2} \int |\nabla \times A|^2\geq &\;\tr\big[ (1- Z^{-1/2}){\mathcal T}^{(\al)}(A)
  - V^{TF}_{{\mathbf Z}, {\mathbf R}}  \big]_{-}
\\
&- D(\rho^{TF}_{{\mathbf Z}, {\mathbf R}} ) - C Z^{11/6} + 
\frac{1}{16 \pi \alpha^2} \int |\nabla \times A|^2.
\nonumber
\end{align}
Clearly, the constant 16 can be replaced with any finite constant larger than $8$
at the expense of changing $C$ in the last line.

\subsection{Scaling}

We now introduce the usual semiclassical scaling of the
Thomas-Fermi theory. 
The kinetic energy operator with the semiclassical parameter $h$ is defined by
\begin{align}
T_h(A) =\begin{cases}
 [ \sigma \cdot (-ih \nabla+A)]^2 & \text{ (Pauli) }\\
 (-ih \nabla+A)^2 & \text{ (Schr\"{o}dinger)}
\end{cases}
\end{align}
and clearly $T(A)$ from \eqref{Tdef} equals to
$T_{h=1}(A)$.

Define
\begin{align}
\kappa = \min_k \frac{2}{\pi z_k},\qquad
h= \kappa^{1/2} Z^{-1/3},\qquad
\beta = Z^{2/3} \alpha  \kappa^{-1/2} = \frac{Z\alpha}{\kappa} h.
\label{param}
\end{align}
In particular, since $Z_k \alpha \leq 2/\pi$, we have $\beta \leq h$.
Note that the  notation generally follows \cite{SSS}, but our
 definition  of $\beta$ differs from \cite{SSS} by a square root.

The Thomas-Fermi potential and density satisfy the scaling relation
\begin{align}\nonumber
V^{TF}_{{\mathbf Z}, {\mathbf R}}(x) = a^{4} V^{TF}_{a^{-3}{\mathbf Z}, a{\mathbf R}}(ax),
 \qquad \rho^{TF}_{{\mathbf Z}, {\mathbf R}}(x) = a^{6} \rho^{TF}_{a^{-3}{\mathbf Z}, a{\mathbf R}}(ax)
\end{align}
for any $a>0$.
In particular, 
\begin{align}\label{scaling}
V^{TF}_{{\mathbf Z}, {\mathbf R}}(x) = Z^{4/3} V^{TF}_{{\mathbf z}, {\mathbf r}}(Z^{1/3} x), \qquad
  D( \rho^{TF}_{{\mathbf Z}, {\mathbf R}}) = Z^{7/3} D( \rho^{TF}_{{\mathbf z}, {\mathbf r}}).
\end{align}
We will perform the scaling $x \mapsto Z^{-1/3} x$. During the scaling we 
replace the vector potential $A$ by 
$$
\widetilde A(x): = Z^{-2/3} \kappa^{1/2}A(Z^{-1/3} x),
$$
 so we get for
 the magnetic energy in \eqref{eq:corr3} 
$$
\frac{1}{16 \pi \alpha^2} \int |\nabla \otimes A|^2 
= Z^{4/3} \Big\{ \frac{1}{16 \pi \beta^2 h^3\kappa^{1/2}} \int |\nabla \otimes \widetilde A|^2 \Big\}
$$ 
and $T(A)$ is replaced with 
$T_h(\wt A)$.
Using \eqref{eq:corr3} and \eqref{scaling} we therefore get
\begin{align}\label{eq:BeforeSemiclass}
\langle \psi, &H({\mathbf Z}, {\mathbf R}, \alpha, A) \psi \rangle+\frac{1}{8 \pi \alpha^2} \int |\nabla \times A|^2\nonumber \\
&\geq
Z^{4/3} \kappa^{-1} (1-Z^{-1/2})
\Big\{
\tr\big( \sqrt{\beta^{-2} T_{h}(\wt A) + \beta^{-4}} - \beta^{-2} - \frac{\kappa}{1-Z^{-1/2}} V^{TF}_{{\mathbf z}, {\mathbf r}} \big)_{-}  \nonumber\\
&\qquad\qquad\qquad\qquad\qquad\qquad+ \frac{\kappa^{1/2}}{16 \pi \beta^2 h^3} \int |\nabla \otimes \widetilde A|^2
\Big\} \nonumber \\
&\quad- Z^{7/3}D( \rho^{TF}_{{\mathbf z}, {\mathbf r}}) - C Z^{11/6}.
\end{align}
The proof of the Scott correction is now reduced to the proof of the following semiclassical theorem:

\begin{theorem}[Scott corrected semiclassics with self-generated field]\label{thm:Thm1.4bis}~\\
Suppose that $\lambda>0$.
There exists a function $\xi: {\mathbb R}_+ \rightarrow  {\mathbb R}_+$ with 
$\xi(h) \rightarrow 0$ as $h \rightarrow 0$, such that if
$0\leq \beta \leq h$, and $\wt \kappa \max\{ z_1,\ldots,z_M\} < 2/\pi$,
then
\begin{align}
&\Big| \inf_{\wt A} \Big\{ \tr\Big[\sqrt{\beta^{-2} T_{h}(\wt A) + \beta^{-4}} - \beta^{-2}
 - \wt \kappa V^{TF}_{{\mathbf z}, {\mathbf r}} \Big]_{-} +
 \frac{\lambda}{\beta^2 h^3} \int |\nabla \otimes \widetilde A|^2
\Big\} \nonumber \\
&\qquad\qquad - \frac{2}{(2\pi h)^3} \iint \Big[ \frac{1}{2}p^2 -  \wt \kappa V^{TF}_{{\mathbf z}, {\mathbf r}}(x)
 \Big]_{-} \,dx dp
- 2 h^{-2} \sum_{k=1}^M (z_k \wt \kappa)^2 S_2(\beta h^{-1} \wt\kappa z_k) \Big| \nonumber \\
&\quad\leq h^{-2} \xi(h).
\label{eq:scottcorr}
\end{align}
\end{theorem}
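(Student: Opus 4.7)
The upper bound in \eqref{eq:scottcorr} follows immediately by choosing $\wt A=0$ in the infimum: the expression inside the absolute value then becomes the non-magnetic semiclassical difference already controlled to order $o(h^{-2})$ by the proof of Theorem~\ref{theoremSSS} in \cite{SSS}, uniformly in the subcritical window $\wt\kappa\max_k z_k<2/\pi$.

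\textbf{A priori bound on the field.} For the lower bound, fix a near-minimizing $\wt A$. Taking $\wt A=0$ as a competitor bounds the entire bracket by a constant multiple of $h^{-3}$ (Weyl leading term), while Theorem~\ref{thm:CritStability} gives the matching lower bound $-Ch^{-3}$ on the kinetic trace for any admissible $\wt A$; comparing the two produces the initial estimate
\[
\int|\nabla\otimes\wt A|^2\le C\beta^2.
\]
Inserting this into the new magnetic Lieb--Thirring bound of Theorem~\ref{thm:LT}, applied to a small fraction of the kinetic energy, shows that the contribution of the field to the trace is $o(h^{-2})$; absorbing it into a slight reduction of $\lambda$ and iterating, we may in fact assume $\int|\nabla\otimes\wt A|^2\le \varepsilon\beta^2 h$ for any prescribed $\varepsilon>0$.

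\textbf{Localization and reduction to the non-magnetic Scott problem.} With the field thus made perturbative, we mimic the localization step of \cite{SSS}. Fix a partition of unity $1=\sum_j\chi_j^2$ on $\R^3$ with one $\chi_k$ supported in the Scott ball $B_k=\{|x-r_k|\le r_0/4\}$ per nucleus and $\chi_0$ supported in the bulk. Because no explicit magnetic relativistic heat kernel is available, we replace Theorem~2.5 of \cite{SSS} by the IMS identity $T_h(\wt A)=\sum_j\chi_j T_h(\wt A)\chi_j-h^2\sum_j|\nabla\chi_j|^2$ applied \emph{under} the square root, using its operator monotonicity, and move the commutator error into a shift of the $\beta^{-4}$ term via the Pull-out Lemma~\ref{lem:Pull-out}. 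Each resulting local operator has the same relativistic form with a marginally modified $\beta$. Inside $B_k$, combining \eqref{VTFbound1} (so that $\wt\kappa V^{TF}$ is comparable to $z_k\wt\kappa/|x-r_k|$ plus a bounded remainder) with the a priori bound on $\wt A$ and Theorem~\ref{thm:LT} lets us delete $\wt A$ at cost $o(h^{-2})$, and the atomic semiclassical Scott expansion from \cite{SSS} contributes $2h^{-2}(z_k\wt\kappa)^2 S_2(\beta h^{-1}\wt\kappa z_k)$. On the bulk piece $\wt\kappa V^{TF}_{{\bf z},{\bf r}}$ is smooth and bounded by \eqref{VTFbound}, so standard semiclassical analysis, with the magnetic correction again controlled by Theorem~\ref{thm:LT}, yields the phase-space integral $\frac{2}{(2\pi h)^3}\iint[\tfrac12 p^2-\wt\kappa V^{TF}]_-\,dx\,dp$ to accuracy $o(h^{-2})$.

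\textbf{Main obstacle.} The principal difficulty is the IMS-under-the-square-root step. One must design the partition at a length scale compatible both with the Thomas--Fermi scale and with $h$, so that the commutator error $h^2\sum|\nabla\chi_j|^2$ can be absorbed into the $\beta^{-4}$ shift through the Pull-out Lemma; simultaneously, the modified parameters on each piece must still satisfy $\beta\le h$ and $\wt\kappa z_k\beta/h<2/\pi$ so that the stability window of Theorem~\ref{thm:CritStability} and the atomic Scott expansion of \cite{SSS} remain applicable. Carrying out this bookkeeping in the absence of the explicit heat-kernel representation used in \cite{SSS} is the heart of the proof.
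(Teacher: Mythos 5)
Your scaffolding matches the paper at a high level (upper bound from $\wt A=0$, localize, remove the field, invoke the non-magnetic SSS asymptotics), but several of the concrete mechanisms you propose would not carry the argument through, and each is where the paper does something more delicate.

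The central gap is the claim that Theorem~\ref{thm:LT}, together with the a priori bound on the field, ``lets us delete $\wt A$ at cost $o(h^{-2})$.'' Theorem~\ref{thm:LT} bounds $\tr[\cdots]_-$ from \emph{below}; it neither compares the $A\neq 0$ and $A=0$ traces nor allows one to drop $A$ from the kinetic operator inside the negative-part trace. What actually removes the field is the split $T_h(A)\ge -(1-2\epsilon)h^2\Delta+\epsilon(-h^2\Delta-\epsilon^{-2}A^2)$ together with the CLR criterion for positivity of the second summand, carried out in \eqref{eq:lowerDelta}--\eqref{eq:CondPos} inside Lemma~\ref{lem:ScottFunction} (and its analogue \eqref{eq:Diamagnetic2} in the semiclassical cells). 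This genuinely reduces the Scott ball to a non-magnetic problem with an $\epsilon$-degraded Laplacian; without it one never reaches a situation where the atomic Scott expansion of \cite{SSS} can be invoked.

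Your two-region partition with \emph{fixed-radius} Scott balls $B_k=\{|x-r_k|\le r_0/4\}$ and a bulk piece does not match the structure of $V^{TF}$. Between the microscopic scale $|x-r_k|\sim h^{3/2}$ and $|x-r_k|\sim h^{-1}$ the potential varies over many orders of magnitude, which is why the paper uses a three-region decomposition with $r=h^{3/2}$, $R=h^{-1}$ and, in the intermediate region, a multiscale covering by balls of variable radius $\ell_u\sim d_{\mathbf r}(u)$ together with the scaled local semiclassics of Theorem~\ref{thm:SemiclassLocScaled}. ``Standard semiclassics'' on the complement of fixed balls would face an unbounded potential. Moreover, a fixed Scott radius blows up the rescaled parameter $\mathcal R\sim \mu r/h^2$ from $h^{-1/2}$ to $h^{-2}$, and the condition $\mathcal R^5/\Lambda\to 0$ that drives Lemma~\ref{lem:ScottFunction} then fails. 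Finally, the global a priori bound $\int|\nabla\otimes\wt A|^2\le C\beta^2$ is a sound starting point, but the further ``iterate and absorb into $\lambda$'' improvement to $\varepsilon\beta^2 h$ is circular — to tighten the trace bound at the next round you would already need the $o(h^{-2})$ estimate you are trying to establish. The paper instead derives local a priori bounds (e.g.\ $\int_{B(2\ell)}|\nabla\otimes A|^2\le C\beta^2\ell^3$ in each cell, and $\int|\nabla\otimes A|^2\le CR^3/\Lambda$ in the rescaled Scott problem), and these are the quantities actually fed into the CLR steps.
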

Notice that the semiclassical asymptotics up to the subleading Scott term is
independent of the parameter $\lambda$ in front of the magnetic field energy.
This justifies the remark that the specific constant $8\pi$ in \eqref{infA1} is
irrelevant and can be replaced with any positive constant.

\medskip

We will first finish the proof of Theorem~\ref{maintheorem}
using Theorem~\ref{thm:Thm1.4bis} before giving the proof of the semiclassical result.
Using Theorem~\ref{thm:Thm1.4bis} in \eqref{eq:BeforeSemiclass} and the
choice of the parameters \eqref{param}, we get
\begin{align}
\langle \psi, &H({\mathbf Z}, {\mathbf R}, \alpha, A) \psi \rangle+\frac{1}{8 \pi \alpha^2} \int |\nabla \times A|^2\nonumber \\
&\geq
Z^{4/3} \kappa^{-1} (1-Z^{-1/2})
\Big\{\frac{2}{(2\pi h)^3} \iint \Big[ \frac{1}{2}p^2 -  
\frac{\kappa}{1-Z^{-1/2}}  V^{TF}_{{\mathbf z}, {\mathbf r}}(x) \Big]_{-} \,dx dp\nonumber \\
&\qquad\qquad+2 h^{-2} \sum_{k=1}^M (\frac{z_k\kappa}{1-Z^{-1/2}} )^2 S_2( \frac{\beta h^{-1}z_k \kappa}{1-Z^{-1/2}} )
- h^{-2} g(h)
\Big\} - D(\rho^{TF}) - C Z^{11/6} \nonumber \\
&=
Z^{7/3} E^{TF}({\bf z},{\bf r}) + 2 \sum_{k=1}^M Z_k^2S_2(Z_k \alpha) + o(Z^2).
\end{align}
Here we used the continuity of $S_2$ and the facts that 
$$\iint \Big[ \frac{1}{2}p^2 - V^{TF}_{{\mathbf z}, {\mathbf r}}(x) \Big]_{-} \,dx dp = 
C \int [V^{TF}_{{\mathbf z}, {\mathbf r}}(x)]^{5/2} dx < \infty
$$
 and
$$
Z^{7/3} \frac{2}{(2\pi)^3} \iint \Big[ \frac{1}{2}p^2 - V^{TF}_{{\mathbf z}, {\mathbf r}}(x)
 \Big]_{-} \,dx dp - Z^{7/3} D(\rho^{TF}_{{\mathbf z}, {\mathbf r}}) =
Z^{7/3} E^{TF}({\bf z},{\bf r}),
$$
by standard results from Thomas-Fermi theory.
This finishes the proof of Theorem~\ref{maintheorem}. $\;\;\Box$

\subsection{Relativistic Lieb-Thirring inequalities with magnetic fields}

In this section we present two new Lieb-Thirring type inequalities
for the relativistic kinetic energy with a magnetic field.
The proofs are given in Section~\ref{sec:proof}.
\begin{theorem}[Lieb-Thirring inequality for ${\mathcal T}^{(\beta)}(A)$]\label{thm:LT}
There exists a universal constant $C>0$ such that
for any positive number $\beta>0$,
for any potential $V$ with $[V]_{+} \in L^{5/2}\cap L^{4}({\mathbb R}^3)$, 
and magnetic field $B = \nabla \times A \in L^2({\mathbb R}^3)$, we have
\begin{align}
&\tr \big[ \sqrt{\beta^{-2} T(A) + \beta^{-4}} - \beta^{-2} - V(x) \big]_{-} \nonumber \\
&\qquad\geq-
C\Bigg\{ \int [V]_{+}^{5/2} + \beta^3  \int [V]_+^4 
+\Big( \int B^2 \Big)^{3/4} \Big(\int [V]_+^4\Big)^{1/4}\Bigg\}.
\label{LT1}
\end{align}
\end{theorem}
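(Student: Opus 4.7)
The operator $\mathcal{T}^{(\beta)}(A)$ behaves like $\tfrac12 T(A)$ for $T(A)\ll\beta^{-2}$ and like $\beta^{-1}\sqrt{T(A)}$ for $T(A)\gg\beta^{-2}$, so the three error terms in \eqref{LT1} naturally correspond to splitting the estimate at the spectral scale $T(A)\sim\beta^{-2}$: the non-relativistic phase-space term $\int V_+^{5/2}$ and the magnetic term $(\int B^2)^{3/4}(\int V_+^4)^{1/4}$ should emerge from the low-energy regime via the Lieb--Loss--Solovej magnetic Pauli Lieb--Thirring, while the relativistic phase-space term $\beta^3\int V_+^4$ should emerge from the high-energy regime via a Daubechies-type bound for $\sqrt{T(A)}$ in a magnetic field.

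To implement this I first prove the scalar inequality $\sqrt{\beta^{-2}\lambda+\beta^{-4}}-\beta^{-2}\ge c_0\min\{\lambda,\beta^{-1}\sqrt{\lambda}\}$ for $\lambda\ge0$ with a universal $c_0>0$, and lift it via spectral calculus on the self-adjoint operator $T(A)$ to the operator inequality $\mathcal{T}^{(\beta)}(A)\ge c_0\bigl(PT(A)P+\beta^{-1}Q\sqrt{T(A)}Q\bigr)$, where the spectral projections $P=\chi_{\{T(A)\le\beta^{-2}\}}$ and $Q=I-P$ commute with $T(A)$. Since $\tr[\mathcal{T}^{(\beta)}(A)-V]_-\ge\tr[\mathcal{T}^{(\beta)}(A)-V_+]_-$ I may assume $V\ge0$. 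The Cauchy--Schwarz bound $V\le 2PVP+2QVQ$ (obtained by controlling the cross terms $\langle V^{1/2}P\psi,V^{1/2}Q\psi\rangle$) then yields the block-diagonal dominance
\[
\mathcal{T}^{(\beta)}(A)-V\ \ge\ P\bigl(c_0 T(A)-2V\bigr)P\ +\ Q\bigl(c_0\beta^{-1}\sqrt{T(A)}-2V\bigr)Q,
\]
and the min-max principle (restriction of a full-space operator to a subspace only raises eigenvalues) gives the decoupled lower bound
\[
\tr\bigl[\mathcal{T}^{(\beta)}(A)-V\bigr]_-\ \ge\ \tr\bigl[c_0 T(A)-2V\bigr]_-\ +\ \tr\bigl[c_0\beta^{-1}\sqrt{T(A)}-2V\bigr]_-.
\]

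To the first trace I apply the LLS magnetic Pauli Lieb--Thirring inequality, which after a trivial rescaling absorbing $c_0$ yields exactly the $\int V_+^{5/2}$ and $(\int B^2)^{3/4}(\int V_+^4)^{1/4}$ contributions. For the second trace I would like to use a magnetic relativistic Lieb--Thirring of the form $\tr[\sqrt{T(A)}-W]_-\ge -C\int W_+^4$; after rescaling $W\sim\beta V$ this produces the desired $\beta^3\int V_+^4$ term. \textbf{The main obstacle} is establishing this square-root Lieb--Thirring in the Pauli case: in the Schr\"odinger setting, a Kato-type diamagnetic inequality for the fractional Laplacian reduces the problem to the classical non-magnetic Daubechies bound, but for the Pauli operator $T(A)=(p+A)^2+\sigma\cdot B$ the indefinite spin term admits zero modes and precludes any such direct diamagnetic reduction. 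Plausible routes are to combine the integral representation $\sqrt{x}=\tfrac{2}{\pi}\int_0^\infty x(x+s^2)^{-1}\,ds$ with resolvent bounds for $T(A)+s^2$, or to adapt the coherent-state localization in the LLS proof directly to $\sqrt{T(A)}$. The absence of any $B$-dependent factor multiplying $\int V_+^4$ in \eqref{LT1} strongly suggests that whatever magnetic error is produced in the high-energy analysis must be absorbable into the $(\int B^2)^{3/4}(\int V_+^4)^{1/4}$ term already produced in the low-energy regime by LLS.
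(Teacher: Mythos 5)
Your operator-level decomposition is sound and is a legitimate alternative to what the paper actually does. You split via spectral projections $P=\chi_{\{T(A)\le \beta^{-2}\}}$, $Q=I-P$, use a scalar inequality and Cauchy--Schwarz to get
\[
\mathcal{T}^{(\beta)}(A)-V\ \ge\ P\bigl(c_0 T(A)-2V\bigr)P+Q\bigl(c_0\beta^{-1}\sqrt{T(A)}-2V\bigr)Q,
\]
and then drop the restriction by min--max. The paper instead first splits the \emph{potential} $V=V_1+V_2$ with $V_1=V\,\mathbf 1_{\{V\le m/2\}}$, $V_2=V\,\mathbf 1_{\{V>m/2\}}$ (writing $m=\beta^{-2}$ up to scaling), and only in the $V_1$-case introduces the spectral projections $\mathbf 1_{\{T<10m^2\}}$, $\mathbf 1_{\{T\ge 10m^2\}}$; since $V_1\le m$, the high-energy block is then nonnegative and can be discarded outright, so no square-root Lieb--Thirring is needed for that case. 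For $V_2$ the paper bypasses spectral projections entirely. Both routes lead to the same two subproblems (a Pauli LLS bound plus a bound on $\tr[c\sqrt{T(A)}-V]_-$), so your structure is fine.

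The genuine gap is exactly the one you flagged: you need a Lieb--Thirring inequality for $\sqrt{T(A)}$ with a Pauli magnetic field, and you do not supply it. Two corrections are in order. First, your closing speculation that $\tr[\sqrt{T(A)}-W]_-\ge -C\int W_+^4$ might hold with no $B$-dependent term, and that any magnetic error could be ``absorbed'' into what LLS already produced in the low-energy block, is wrong in spirit: by Aharonov--Casher, the Pauli operator $T(A)$ can have arbitrarily many zero modes, so $\sqrt{T(A)}$ does too, and each contributes $-\langle\psi,W\psi\rangle$ to $\tr[\sqrt{T(A)}-W]_-$; a $B$-free bound is therefore impossible. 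The correct estimate, and the one the paper establishes, is
\[
\tr\bigl[\sqrt{T(A)}-W\bigr]_-\ \ge\ -C\Bigl\{\textstyle\int W_+^4+\bigl(\int B^2\bigr)^{3/4}\bigl(\int W_+^4\bigr)^{1/4}\Bigr\},
\]
i.e.\ the high-energy regime independently produces the cross term $\bigl(\int B^2\bigr)^{3/4}\bigl(\int W_+^4\bigr)^{1/4}$, it does not borrow it. Second, the missing key tool is neither the integral representation of $\sqrt{\cdot}$ nor a direct coherent-state analysis of $\sqrt{T(A)}$; it is the Birman--Koplienko--Solomyak inequality $\tr(P-Q)_-\ge -\tr\bigl[-(P^2-Q^2)_-\bigr]^{1/2}$ for nonnegative operators. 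Applying it converts the problem into bounding the $\tfrac12$-moment $\tr\bigl(-[T(A)-CV^2]_-\bigr)^{1/2}$. This is then handled by the Lieb--Loss--Solovej ``running energy scale'' method: write $\tr(-[X]_-)^{1/2}=\int_0^\infty\mathcal N(X+e)\,\frac{de}{\sqrt e}$, insert a parameter $\lambda\in(0,1]$ via $T(A)\ge\lambda T(A)$, bound $T(A)\ge(-i\nabla+A)^2-|\nabla\times A|$, apply CLR, and then optimize over $\lambda$; the optimization produces exactly the two terms $\int V_+^4$ and $\bigl(\int B^2\bigr)^{3/4}\bigl(\int V_+^4\bigr)^{1/4}$. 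With BKS plus running scales supplied, your decomposition closes; without it, the proposal is incomplete.
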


Notice that 
Theorem~\ref{thm:LT} reduces to the well-known Daubechies inequality in the case $A=0$ \cite{Dau}.
For the Schr\"odinger case, 
the Daubechies inequality was generalized (and improved to incorporate a critical Coulomb singularity)
 to non-zero $A$   in \cite{FLS} by using diamagnetic techniques. Theorem~\ref{thm:LT}
is the generalization of the Daubechies inequality for the Pauli operator, in which case there is 
no diamagnetic inequality.
Moreover, in the $\beta\to 0$ limit, \eqref{LT1} converges to the magnetic Lieb-Thirring inequality
for the Pauli operator \cite{LLS}
since
$$
    \sqrt{\beta^{-2} T(A) + \beta^{-4}} - \beta^{-2}\to \frac{1}{2} T(A), \qquad \beta\to0.
$$

 Theorem~\ref{thm:LT} does not cover 
the case of a Coulomb singularity.
The next result shows that  for $\beta$ smaller than the
critical value $2/\pi$, the Coulomb singularity can be included.
 The constraint $\beta<2/\pi$ in this theorem is the
main reason why our proof of  Theorem~\ref{maintheorem} 
does not extend to the critical case, $\kappa_0=2/\pi$.

\begin{theorem}[Local Lieb-Thirring inequality with a Coulomb potential]\label{thm:CritStability}
Let $\phi_r$ be a real function satisfying $\supp \phi_r \subset \{|x|\leq r\}$, 
$\| \phi_r \|_{\infty} \leq 1$.
There exists a constant $C>0$ such that if $\beta \in (0,2/\pi)$, then
\begin{align}\label{eq:stability2}
\tr \Big[ \phi_r \big( &\sqrt{\beta^{-2} T(A) + \beta^{-4}} - \beta^{-2} -
 \frac{1}{|x|}- V \big) \phi_r \Big]_{-} \nonumber \\
&\geq
- C \Big\{ \eta^{-3/2} \int |\nabla \times A|^2 + \eta^{-3} r^3 + \eta^{-3/2} \int [V]_{+}^{5/2} + 
\eta^{-3} \beta^3 \int [V]_{+}^4 \nonumber \\
&\qquad\quad+ \Big(\int |\nabla \times A|^2 \Big)^{3/4}  \Big(\int [V]_{+}^4\Big)^{1/4} \Big\},
\end{align}
where $\eta := \frac{1}{10}(1-(\pi\beta/2)^2)$.
\end{theorem}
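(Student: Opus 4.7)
The plan is to reduce Theorem~\ref{thm:CritStability} to Theorem~\ref{thm:LT} by peeling off a small fraction of the relativistic kinetic energy to absorb the critical Coulomb singularity, handing the remainder (with the potential $V$) to the magnetic Lieb--Thirring inequality. First I would split
\[
   \mathcal{T}^{(\beta)}(A) - \frac{1}{|x|} - V \;=\; \Bigl[(1-10\eta)\mathcal{T}^{(\beta)}(A) - \frac{1}{|x|}\Bigr] \;+\; \Bigl[10\eta\,\mathcal{T}^{(\beta)}(A) - V\Bigr],
\]
apply the variational inequality $\tr[X+Y]_- \ge \tr X_- + \tr Y_-$, and use the trivial pull-out $\tr[\phi_r H\phi_r]_- \ge \tr H_-$ (valid because $\|\phi_r\|_\infty\le 1$) to reduce the problem to two separate trace estimates.

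For the potential summand, scaling gives $\tr[10\eta\,\mathcal{T}^{(\beta)}(A) - V]_- = 10\eta\,\tr[\mathcal{T}^{(\beta)}(A) - V/(10\eta)]_-$, and Theorem~\ref{thm:LT} applied with $V$ replaced by $V/(10\eta)$ produces precisely the three $V$-dependent terms on the right-hand side of \eqref{eq:stability2}: the prefactor $\eta$ combines with $\eta^{-5/2}$ and $\eta^{-4}$ from the LT bounds to yield $\eta^{-3/2}\int[V]_+^{5/2}$ and $\eta^{-3}\beta^3\int[V]_+^4$, while the mixed term $\bigl(\int|\nabla\times A|^2\bigr)^{3/4}\bigl(\int[V]_+^4\bigr)^{1/4}$ acquires no net $\eta$-factor because its exponent on $V$ is $1/4$.

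For the Coulomb summand $\tr\bigl[(1-10\eta)\mathcal{T}^{(\beta)}(A) - 1/|x|\bigr]_-$ my strategy would be to absorb the critical Coulomb singularity into the kinetic energy via a magnetic relativistic Hardy (Kato--Herbst) inequality for the Pauli operator of the schematic form $\sqrt{T(A)}\ge 2(\pi|x|)^{-1} - C\cdot\mathfrak{E}(A)$, where the quantitative magnetic error $\mathfrak{E}(A)$ is controlled by $\int|\nabla\times A|^2$. The specific value $\eta=(1-(\pi\beta/2)^2)/10$ furnishes exactly the subcritical slack needed for this absorption. Once the Coulomb term is absorbed, the remaining pointwise operator lower bound on the support of $\phi_r$ is by a constant of size $O(\eta^{-1})$ plus the magnetic error; a local Lieb--Thirring / Weyl-type count of bound states on the ball of radius $r$ then converts this into the two remaining terms $\eta^{-3/2}\int|\nabla\times A|^2$ and $\eta^{-3}r^3$ of \eqref{eq:stability2}.

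The principal obstacle will be the magnetic Hardy--Kato inequality for the \emph{Pauli} square root $\sqrt{T(A)}$. In the Schr\"odinger case the bound $\sqrt{(-i\nabla+A)^2}\ge 2(\pi|x|)^{-1}$ follows from diamagnetic domination combined with classical Herbst; but since $T(A)=(-i\nabla+A)^2+\sigma\cdot B$ for Pauli and $\sigma\cdot B$ can be negative, no diamagnetic reduction is available. The required estimate has to be proved by a direct quadratic form argument, and this is precisely where the ``IMS under the square root'' device and the ``Pull-out'' inequality advertised in the introduction are expected to play the decisive role. The quantitative form of this magnetic Hardy bound is what ultimately fixes the $\eta^{-3/2}$ and $\eta^{-3}$ coefficients in front of $\int|\nabla\times A|^2$ and $r^3$ in \eqref{eq:stability2}.
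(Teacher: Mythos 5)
Your second summand is handled correctly: peeling off a factor $10\eta$ of the kinetic energy, writing $\tr[10\eta\,\mathcal{T}^{(\beta)}(A)-V]_-=10\eta\,\tr[\mathcal{T}^{(\beta)}(A)-V/(10\eta)]_-$, and applying Theorem~\ref{thm:LT} does reproduce the three $V$-dependent terms in \eqref{eq:stability2} with the stated $\eta$-powers. But the Coulomb summand is where the proposal breaks down, and it does so quantitatively.

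The decisive problem is that you allocate $(1-10\eta)\,\mathcal{T}^{(\beta)}(A)$ --- with the rest-mass subtraction $-\beta^{-2}$ still attached --- to the Coulomb part. Even granting a perfect magnetic Hardy--Kato inequality $\sqrt{T(A)}\geq (2/\pi)|x|^{-1}-\mathfrak{E}(A)$, the best you could extract is
$(1-10\eta)\,\mathcal{T}^{(\beta)}(A)\geq (1-10\eta)\beta^{-1}\sqrt{T(A)}-(1-10\eta)\beta^{-2}\geq \big[(1-10\eta)(2/\pi)\beta^{-1}\big]|x|^{-1}-\cdots$,
and since $1-10\eta=(\pi\beta/2)^2$ by definition of $\eta$, the coefficient of $|x|^{-1}$ is $(\pi\beta/2)^2\cdot(2/\pi\beta)=\pi\beta/2<1$: the Coulomb singularity is \emph{not} absorbed. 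The paper avoids this by a different decomposition. Using $\sqrt{a+b}\geq(1-\eta)\sqrt{a}+\eta\sqrt{a+b}$ it places the \emph{massless} piece $(1-\eta)\sqrt{\beta^{-2}T(A)}$ (with no $-\beta^{-2}$ subtraction) into the Coulomb summand and moves the compensating $-\beta^{-2}$, restricted to $\{|x|\le r\}$ via the cutoff, into the $V$-summand. The subcritical slack that survives is then $(1-\eta)^2-(\pi\beta/2)^2\geq 8\eta>0$, and the $-\beta^{-2}\mathbf{1}_{\{|x|\le r\}}$ is precisely what generates the $\eta^{-3}r^3$ term when fed into Theorem~\ref{thm:LT}. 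Your ``local Weyl count of bound states'' account of the $r^3$ term is not the mechanism, and your estimate ``constant of size $O(\eta^{-1})$'' is incorrect --- the constant term picks up factors of $\beta^{-5}$ and $\beta^{-1}$ in the paper (Lemma~\ref{thm:CritStability2}), which is part of why a separate small-$\beta$ argument is needed (Lemma~\ref{lem:smallbeta}), a case distinction your proposal does not mention.

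The hypothesized ``magnetic relativistic Hardy for the Pauli square root $\sqrt{T(A)}$'' is also not what the paper does, and is not a workable starting point. Zero modes of the Pauli operator make any operator inequality of the form $\sqrt{T(A)}\geq c|x|^{-1}-\mathfrak{E}(A)$ delicate, and neither the IMS-under-the-square-root device nor Lemma~\ref{lem:Pull-out} enters here (those tools are used for the localization arguments in Sections~3 and~5, not in the proofs of the Lieb--Thirring theorems). Instead the paper uses Kato's inequality for the magnetic Schr\"odinger momentum $|-i\nabla+A|$, which has no $\sigma\cdot B$ coupling and thus does satisfy the bare Hardy bound; it then invokes the BKS inequality \eqref{eq:BKS} to compare $(1-\eta)\sqrt{\beta^{-2}T(A)}$ with $(\pi/2)|-i\nabla+A|$ at the level of trace norms, reducing to the negative part of $[(1-\eta)^2-(\pi\beta/2)^2](-i\nabla+A)^2-(1-\eta)^2|\nabla\times A|\geq 8\eta(-i\nabla+A)^2-|\nabla\times A|$, which is handled by the non-relativistic Lieb--Thirring inequality for half-moments and produces exactly the $\eta^{-3/2}\int|\nabla\times A|^2$ term. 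In summary: your overall outline is recognizable, but the split you propose destroys the subcritical Coulomb slack, the Hardy inequality you posit is not available, and the small-$\beta$ case is missing.
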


For simplicity, we stated both theorems for $h=1$, but 
$T(A)=T_{h=1}(A)$ can easily be replaced with $T_h(A)$ and the $h$ scaling on
the right hand sides can be easily computed. In particular, we have from \eqref{LT1} that
\begin{align}
&\tr \big[ \sqrt{\beta^{-2} T_h(A) + \beta^{-4}} - \beta^{-2} - V(x) \big]_{-} \nonumber \\
&\qquad\geq-
C\Bigg\{ h^{-3}\int [V]_{+}^{5/2} + h^{-3}\beta^3  \int [V]_+^4 
+\Big( h^{-2} \int B^2 \Big)^{3/4} \Big(\int [V]_+^4\Big)^{1/4}\Bigg\}.
\label{LT2}
\end{align}

\section{Proof of Theorem~\ref{thm:Thm1.4bis}}
For the upper bound in Theorem~\ref{thm:Thm1.4bis} we can just take $A=0$ and apply \cite[Theorem~1.4]{SSS}.
Notice that this will actually also provide us with a non-magnetic trial state which has the correct energy.

For the lower bound we will follow the proof of the similar result
 in the non-magnetic case \cite[Theorem~1.4]{SSS} (see pages 68--75).
 In particular, we will use the same localizations. 

Consider a smooth partition of unity,
$$
\theta_{-}^2 + \theta_{+}^2 =1,
$$
where $\theta_{-}(t) =1$ if $t<1$, $\theta_{-}(t) =0$ if $t >2$. 
Define
\begin{align}
  \label{eq:4}
\Phi_{\pm}(x) = \theta_{\pm}(d(x)/R), \qquad \phi_{\pm}(x) = \theta_{\pm}(d(x)/r).  
\end{align}
where $d(x):=d_{{\mathbf r}}(x)$ for simplicity.
At the end of the calculation we will see that
the small parameter $r$ can be chosen $r=h^{3/2}$ and the large
parameter $R$ as $R= h^{-1}$. We note that $r$ was chosen differently
 in \cite{SSS}.

For $r>0$ we define
\begin{align}\label{phimin}
\phi_{-}(x) := \sum_{k=1}^M \theta_{r,k}(x), \quad \text{ with } \theta_{r,k}(x) := \theta_{-}(|x-r_k|/r).
\end{align}
and we note that for $r$ sufficiently small and $R$ sufficiently large,
$\theta_{r,k}$ have disjoint supports
and these supports are in the regime where $\Phi_-\equiv 1$.
We thus have 
$$
\sum_k \theta_{r,k}^2+\Phi_-^2\phi_+^2+\Phi_+^2= 1
$$
as a partition of unity.
Defining
$$
   W_{r,R}(x):=  r^{-2} {\bf 1}_{\{ r\le d(x) \leq 2r\}}+R^{-2} {\bf 1}_{\{ R\leq d(x) \leq 2R\}},
$$
the IMS formula allows us to insert these localizations and estimate
\begin{align}
  T_{h}(\wt A)+\beta^{-2}\ge & \sum_{k=1}^M \theta_{r,k}\big(  T_{h}(\wt A) - Ch^2r^{-2}+\beta^{-2}\big) \theta_{r,k}
 \nonumber\\
&  + \Phi_{-} \phi_+ \Big(  T_{h}(\wt A)-
  C h^2 W_{r,R}
 +\beta^{-2}\Big)\Phi_{-} \phi_+
\nonumber\\
& + \Phi_{+}\Big( T_h(\wt A) - C h^2
   R^{-2} {\bf 1}_{\{ d(x) \leq 2R\}}+\beta^{-2}  \Big) \Phi_{+}.
\label{IMS}
\end{align}
Notice that all operators in the brackets in the right hand side are
non-negative, since $h^2R^{-2}\ll h^2r^{-2}=h^{-1}\ll \beta^{-2}$.
After multiplying both sides by $\beta^{-2}$ we can take
the square root of the inequality \eqref{IMS}, using that
the square root  is operator monotone.
In order to  pull out the
localization functions from under the square root, we will need the following general estimate:

\begin{lemma}[Pull-out estimate]\label{lem:Pull-out}
Let $I$ be a countable index set and let $g_i$, $i\in I$, be a family of non-negative 
smooth functions such that $\sum_{i\in I} g_i^2(x) = 1$ for every $x \in {\mathbb R}^3$.
 Let $A_i$, $i \in I$, be a family of positive self-adjoint
 operators on $L^2({\mathbb R}^3, {\mathbb C}^2)$. Then
\begin{align}\label{eq:Pull-out}
\sqrt{ \sum_{i\in I} g_i A_i g_i} \geq \sum_{i \in I} g_i \sqrt{A_i} g_i.
\end{align}
\end{lemma}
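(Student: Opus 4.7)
The plan is to reduce the claim about square roots to the corresponding inequality for resolvents, via the integral representation
\begin{equation*}
\sqrt{X} \;=\; \frac{1}{\pi}\int_0^\infty \frac{X}{s+X}\cdot\frac{ds}{\sqrt{s}} \qquad (X\ge 0),
\end{equation*}
which follows from the scalar identity $\sqrt{\lambda}=\pi^{-1}\int_0^\infty \lambda/(\sqrt{s}\,(s+\lambda))\,ds$ by the spectral theorem. Setting $A:=\sum_{i\in I} g_i A_i g_i$, writing $X/(s+X) = I - s(s+X)^{-1}$, and using $\sum_i g_i^2 = I$ to cancel the identity part, the difference $\sqrt{A} - \sum_i g_i\sqrt{A_i}g_i$ takes the form $\pi^{-1}\int_0^\infty \sqrt{s}\bigl[\sum_i g_i(s+A_i)^{-1}g_i - (s+A)^{-1}\bigr]\,ds$, so Lemma~\ref{lem:Pull-out} reduces to the pointwise-in-$s$ resolvent inequality
\begin{equation*}
\sum_{i\in I} g_i(s+A_i)^{-1}g_i \;\ge\; (s+A)^{-1}, \qquad s>0.
\end{equation*}

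For this resolvent inequality I would use a block-matrix Schur-complement argument. Introduce the column isometry $\Gamma: L^2(\R^3,\C^2)\to\bigoplus_{i\in I} L^2(\R^3,\C^2)$, $\Gamma\phi:=(g_i\phi)_{i\in I}$, so that $\Gamma^*\Gamma=\sum_i g_i^2=I$, and the block-diagonal operator $B:=\bigoplus_i A_i$ on the target space. One verifies directly that $A=\Gamma^*B\Gamma$, $s+A=\Gamma^*(s+B)\Gamma$, and $\sum_i g_i(s+A_i)^{-1}g_i=\Gamma^*(s+B)^{-1}\Gamma$. Setting $C:=(s+B)^{1/2}$ and considering the row operator $Y:=(C^{-1}\Gamma,\,C\Gamma)$, block multiplication gives
\begin{equation*}
Y^*Y \;=\; \begin{pmatrix} \Gamma^*(s+B)^{-1}\Gamma & \Gamma^*\Gamma \\ \Gamma^*\Gamma & \Gamma^*(s+B)\Gamma \end{pmatrix} \;=\; \begin{pmatrix} \Gamma^*(s+B)^{-1}\Gamma & I \\ I & s+A \end{pmatrix}\;\ge\; 0,
\end{equation*}
and taking the Schur complement with respect to the positive invertible $(2,2)$-block yields exactly $\Gamma^*(s+B)^{-1}\Gamma \ge (s+A)^{-1}$, as required. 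Equivalently, this is the well-known operator convexity of $t\mapsto 1/t$, from which the operator concavity of $\sqrt{\cdot}$ is recovered via the integral representation above.

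The main difficulty is technical rather than conceptual: for possibly unbounded $A_i$ all of the above must be justified in the quadratic-form sense, and one must control the interchange of the (possibly infinite) sum over $i\in I$ with the integral over $s$ and with the spectral calculus. I would first prove everything for bounded $A_i$, where every step is a purely algebraic identity, and then recover the general case by monotone approximation --- e.g.\ replacing each $A_i$ by its spectral truncation $A_i\wedge nI$ and letting $n\to\infty$ --- using strong resolvent convergence to pass to the limit.
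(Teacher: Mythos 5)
Your proof is correct and follows the same overall strategy as the paper: both reduce \eqref{eq:Pull-out} via the integral representation $\sqrt{A}=c\int_0^\infty\bigl(1-t(A+t)^{-1}\bigr)t^{-1/2}\,dt$ (your $X/(s+X)=I-s(s+X)^{-1}$ is an identical rewriting) to the ``pull-up'' resolvent inequality $\sum_i g_i(s+A_i)^{-1}g_i\geq(s+A)^{-1}$. The one place where you go further than the paper is that you supply a self-contained proof of that resolvent inequality via the dilation $\Gamma\phi=(g_i\phi)_i$, the block operator $B=\bigoplus_i A_i$, and a Schur-complement argument applied to $Y^*Y\ge0$ with $Y=(C^{-1}\Gamma,\,C\Gamma)$, $C=(s+B)^{1/2}$; the paper instead simply cites \cite{BFFGS} (see also \cite[Proposition~6.1]{ES1}) for this fact. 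Your closing remark that the resolvent inequality is equivalent to operator convexity of $t\mapsto t^{-1}$ applied to the unital positive map $X\mapsto\Gamma^*X\Gamma$ is the standard way to package the same observation, and your suggestion to handle unbounded $A_i$ by truncation and strong resolvent convergence is the right way to deal with the technicalities the paper leaves implicit.
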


\begin{proof}
The proof follows from the integral representation
$$
\sqrt{A} = \mbox{(const.)}\int_0^{\infty} \Big(1 - \frac{t}{A+t}\Big) \frac{dt}{\sqrt{t}},
$$
and from the similar ``pull-up'' formula for the resolvents
which first appeared in \cite{BFFGS}, see also \cite[Proposition~6.1]{ES1}.
\end{proof}

Applying the estimate \eqref{eq:Pull-out}, we get from \eqref{IMS} that
\begin{align}
  \label{eq:5}
  &\sqrt{\beta^{-2} T_{h}(\wt A) + \beta^{-4}} - \beta^{-2} - \wt
  \kappa V^{TF}_{{\mathbf z}, {\mathbf r}}\nonumber \\
&\geq
\sum_{k=1}^M \theta_{r,k} \Big( \sqrt{\beta^{-2} T_{h}(\wt A)-
  C \beta^{-2} h^2 r^{-2} + \beta^{-4}} - \beta^{-2} - \wt
  \kappa V^{TF}_{{\mathbf z}, {\mathbf r}} \Big) \theta_{r,k} \nonumber \\
&\quad +\Phi_{-} \phi_+ \Big( \sqrt{\beta^{-2} T_{h}(\wt A)-
  C \beta^{-2} h^2 W_{r,R}+ \beta^{-4}} - \beta^{-2} - \wt
  \kappa V^{TF}_{{\mathbf z}, {\mathbf r}} \Big)\phi_+\Phi_{-}  \nonumber \\
&\quad+\Phi_{+}\Big\{ \sqrt{\beta^{-2} T_h(\wt A) + \beta^{-4} - C h^2
  \beta^{-2} R^{-2} {\bf 1}_{\{ d(x) \leq 2R\}}} - \beta^{-2} - \wt \kappa V^{TF}_{{\mathbf z}, {\mathbf r}} \Big\} \Phi_{+}
\end{align}
The three terms will be considered independently in the next three subsections.
 The first one contains
the contributions from the nuclei and will give the Scott term. The
second gives the main contribution to the energy and is
semiclassical. Finally the third term is a small error term.
The main technical results, the analysis of the Scott term and the local
semiclassical asymptotics, will be proved separately in Sections~\ref{sec:Scott} and
\ref{sec:locsc}.

\subsection{The region far from the nuclei}
Let us start by considering the outer term in \eqref{eq:5} resulting from the localization procedure, namely
\begin{align}
\tr\Big[\Phi_{+}\Big\{ \sqrt{\beta^{-2} T_h(\wt A) + \beta^{-4} - C h^2
  \beta^{-2} R^{-2} {\bf 1}_{\{ d(x) \leq 2R\}}} - \beta^{-2} - \wt \kappa V^{TF}_{{\mathbf z}, {\mathbf r}} \Big\} \Phi_{+}\Big]_{-}.
\end{align}
We introduce a dyadic partition of unity
\begin{align}
 &\supp \phi_{0,R} \subset B(4R),
\quad \supp \phi_{j,R} \subset B(2^{j+2}R) \setminus B(2^j
R),&\nonumber \\ 
&\sum_{j=0}^{\infty} \phi_{j,R}^2 =1, \quad\qquad \qquad |\nabla \phi_{j,R}| \leq C 2^{-j} R^{-1}.
\end{align}
Then 
\begin{align}
T_h(\wt A) \geq \sum_j \phi_{j,R} \big[ T_h(\wt A) - C h^2 2^{-2j} R^{-2}\big] \phi_{j,R}.
\end{align}
So,
\begin{multline}
\beta^{-2} T_h(\wt A) + \beta^{-4} - C h^2 \beta^{-2} R^{-2} {\bf 1}_{\{ d(x) \leq 2R\}}\\
\geq
\sum_j \phi_{j,R} \big[ \beta^{-2}T_h(\wt A) - C \beta^{-2} h^2 2^{-2j} R^{-2} + \beta^{-4} \big] \phi_{j,R}.
\end{multline}
Therefore, by operator monotonicity of the square root and the pull-out estimate, Lemma~\ref{lem:Pull-out},
\begin{align}
\Phi_{+}&\Big\{ \sqrt{\beta^{-2} T_h(\wt A) + \beta^{-4} -
 C h^2 \beta^{-2} R^{-2} {\bf 1}_{\{ d(x) \leq 2R\}}} - \beta^{-2} - \wt\kappa V^{TF}_{{\mathbf z}, {\mathbf r}} \Big\} \Phi_{+}\nonumber \\
&\geq
\sum_{j=0}^{\infty}
\Phi_{+}\phi_{j,R}\Big\{ \sqrt{\beta^{-2} T_h(\wt A)
 + \beta^{-4} - C h^2 \beta^{-2} 2^{-2j}R^{-2} } - \beta^{-2} - \wt\kappa V^{TF}_{{\mathbf z}, {\mathbf r}} \Big\} \Phi_{+}\phi_{j,R}.
\end{align}
We define 
$$
\gamma_j := \beta (1- C h^2 \beta^{2} 2^{-2j}R^{-2} )^{-1/4},
$$
and note that $\beta\le \gamma_j\le 2\beta\le 2h$ using $\beta\le h$ and $R\ge 1$.
We also have
$0\leq \beta^{-2} - \gamma_j^{-2} \leq C h^2 2^{-2j} R^{-2}$, so we can continue the estimate 
(using the operator monotonicity of the square root) as
\begin{align}
\Phi_{+}&\Big\{ \sqrt{\beta^{-2} T_h(\wt A) + \beta^{-4} - C h^2 \beta^{-2} R^{-2} {\bf 1}_{\{ d(x) \leq 2R\}}}
 - \beta^{-2} - \wt\kappa V^{TF}_{{\mathbf z}, {\mathbf r}} \Big\} \Phi_{+}\nonumber \\
&\geq
\sum_{j=0}^{\infty}
\Phi_{+}\phi_{j,R}\Big\{ \sqrt{\gamma_j^{-2} T_h(\wt A) + \gamma_j^{-4} } - \gamma_j^{-2} - C h^2 2^{-2j} R^{-2}
-  \wt\kappa V^{TF}_{{\mathbf z}, {\mathbf r}} \Big\} \Phi_{+}\phi_{j,R}.
\end{align}
Recall  from \eqref{VTFbound} that $|V^{TF}_{{\mathbf z}, {\mathbf r}}(x)| \leq C d_{{\mathbf r}}(x)^{-4}$
and $\wt\kappa \le 2M/\pi$ since $\wt\kappa \max_k z_k\le 2/\pi$ and $\max_k z_k\ge 1/M$.
Thus $|\wt \kappa V^{TF}_{{\mathbf z}, {\mathbf r}}(x)|\le C|x|^{-4}$ as $|x|\geq C$, where 
the large constant $C$ depends only 
on $M$ and ${\mathbf r}$. Hence $|V^{TF}_{{\mathbf z}, {\mathbf r}}| \leq C
2^{-4j}R^{-4}$ on $\supp \phi_{j,R}$.
With the choice $R=h^{-1}$ we can therefore absorb $V^{TF}_{{\mathbf z}, {\mathbf r}}$ in the
term $C h^2 2^{-2j} R^{-2}$ by changing the constant $C$.

{F}rom the semiclassical form of the Lieb-Thirring inequality \eqref{LT2} we get
\begin{align}
&\tr\Big[ \Phi_{+}\phi_{j,R}\Big\{ \sqrt{\gamma_j^{-2} T_h(\wt A) + \gamma_j^{-4} } - \gamma_j^{-2}
 - C h^2 2^{-2j} R^{-2} \Big\} \Phi_{+}\phi_{j,R}\Big]_{-}\nonumber \\
&\geq -C \Big\{ h^{-3} 2^{3j} R^3 \big( h^5 2^{-5j} R^{-5} + \gamma_j^3 h^8 2^{-8j} R^{-8}
  \big) + \Big(h^{-2} \int |\nabla \otimes \wt A|^2\Big)^{3/4} ( 2^{-5j} R^{-5} h^8 )^{1/4}\Big\}
 \nonumber \\
&\geq
-C \Big\{ h^2 R^{-2} 2^{-2j} +  \gamma_j^3 h^5 2^{-5j} R^{-5}+2^{-2j} h^5 R^{-5} +
2^{-j} h^{-1} \int |\nabla \otimes \wt A|^2\Big\}.
\end{align}
Using the trivial estimate $\gamma_j  \le 2h\le 1$ and summing up, we therefore find
\begin{align}
&\sum_j 
\tr\Big[\Phi_{+}\phi_{j,R}\Big\{ \sqrt{\gamma_j^{-2} T_h(\wt A) + \gamma_j^{-4} }
 - \gamma_j^{-2} - C h^2 2^{-2j} R^{-2} \Big\} \Phi_{+}\phi_{j,R}\Big]_{-}\nonumber \\
&\geq
- C \Big\{ h^2 R^{-2} +  h^5  R^{-5}+ h^{-1} \int |\nabla \otimes \wt A|^2\Big\}
\end{align}
With the choice $R= h^{-1}$ we  get
\begin{align}
&\sum_j 
\tr\Big[ \Phi_{+}\phi_{j,R}\Big\{ \sqrt{\gamma_j^{-2} T_h(\wt A) + \gamma_j^{-4} }
 - \gamma_j^{-2} - C h^2 2^{-2j} R^{-2} \Big\} \Phi_{+}\phi_{j,R}\Big]_{-}\nonumber \\
&\geq
- C- Ch^{-1} \int |\nabla \otimes \wt A|^2.
\end{align}
So in conclusion, using the choice $R=h^{-1}$,
\begin{align}
  \label{eq:3}
\tr\Big[&\Phi_{+}\Big\{ \sqrt{\beta^{-2} T_h(\wt A) + \beta^{-4} - C h^2
  \beta^{-2} R^{-2} {\bf 1}_{\{ d(x) \leq 2R\}}} - \beta^{-2} -\wt\kappa V^{TF}_{{\mathbf z}, {\mathbf r}}
\Big\} \Phi_{+}\Big]_{-} \nonumber \\
  &\geq
- C- Ch^{-1} \int |\nabla \otimes \wt A|^2.
\end{align}

\subsection{The semiclassical region}

In this section we estimate  
 the intermediate region, the second term in \eqref{eq:5}. 
We apply a multiscale analysis by localizing the intermediate regime
into balls of varying radii such that each radius is comparable with
the distance of the center of the ball to the nearest nucleus. 
We then rescale the problem in each ball to a model problem
in the unit ball with new parameters $h$ and $\beta$.
The model problem is analyzed in Section~\ref{sec:locsc},
here we state the scaled version of the main result: 

\begin{theorem}[Scaled local semiclassics]\label{thm:SemiclassLocScaled}
Let $\ell, f, \lambda >0$.
Let $\theta$ be a bounded smooth cutoff function supported on the ball $B(\ell)$ 
and let $V$ a smooth real potential on $B(\ell)$. 
Assume that there is a constant $C'$ and
 for any multiindex $n\in {\mathbb N}^3$ there is a constant $C_n$ such that
$$
|\ell^{|n|} \partial^n \theta|+
|f^{-2} \ell^{|n|} \partial^n V| \leq C_n,  \;\; \mbox{and} \quad \beta f^2 \ell \leq C'h.
$$
Then
\begin{align}
\Big| &\inf_A \Big\{ 
\tr\Big[ \theta \big\{ \sqrt{\beta^{-2} T_h(A) + \beta^{-4}} - \beta^{-2} - V \big\} \theta \Big]_{-}
+ \frac{\lambda f \ell^2 }{\beta^2 h^3} \int_{B(2\ell)}|\nabla \otimes A|^2 \Big\} \nonumber \\
&- \frac{2}{(2\pi h)^3} \iint \theta(x)^2 \Big[\frac{1}{2} p^2-V(x)\Big]_{-} dx dp \Big|
\leq C h^{-2+1/11} \ell^{2-1/11} f^{4-1/11},
\end{align}
where $C$ depends on $\lambda$, $C'$ and on finitely many constants $C_n$.
\end{theorem}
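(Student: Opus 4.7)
The plan is to reduce Theorem~\ref{thm:SemiclassLocScaled} to a unit-scale ($\ell=f=1$) local semiclassical result---the ``model problem'' to be established in Section~\ref{sec:locsc}---via the natural dilation $x=\ell y$ together with the rescalings
\[
\wt h = \frac{h}{\ell f},\qquad \wt\beta = \beta f,\qquad \wt\theta(y) = \theta(\ell y),\qquad \wt V(y) = f^{-2}V(\ell y),\qquad \wt A(y) = f^{-1}A(\ell y).
\]
Under the hypotheses on $\theta$ and $V$, all derivatives of $\wt\theta$ and $\wt V$ are bounded uniformly in $h,\ell,f$, so both live on the unit ball with uniform constants; the smallness assumption $\beta f^2\ell\le C'h$ becomes $\wt\beta\le C'\wt h$, which is the regime of the unit-scale theorem (with constants depending on $C'$).

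The key algebraic step I would verify is the kinetic-energy scaling identity
\[
T_h(A)(x)\big|_{x=\ell y} = f^2\, T_{\wt h}(\wt A)(y),
\]
most cleanly checked by splitting $T_h(A)=(-ih\nabla+A)^2 + h\,\sigma\cdot B$ and using that the rescaling gives $B(\ell y)=(f/\ell)\wt B(y)$, so that both the Schr\"odinger and spin pieces carry the same factor $f^2$. Combined with $\wt\beta^{2}=\beta^2 f^2$, this extends to
\[
\sqrt{\beta^{-2}T_h(A)+\beta^{-4}} - \beta^{-2} - V(x)\,\Big|_{x=\ell y} = f^2\Bigl(\sqrt{\wt\beta^{-2}T_{\wt h}(\wt A)+\wt\beta^{-4}} - \wt\beta^{-2} - \wt V(y)\Bigr).
\]
Conjugating by the $L^2$-isometry $(U\psi)(y)=\ell^{3/2}\psi(\ell y)$ yields $\tr[\theta(\cdot)\theta]_- = f^2\,\tr[\wt\theta(\cdot)\wt\theta]_-$. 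A short computation gives $\int|\nabla\otimes A|^2\,dx = \ell f^2\int|\nabla\otimes\wt A|^2\,dy$, so the magnetic prefactor $\lambda f\ell^2/(\beta^2 h^3)$ becomes exactly $f^2\cdot\lambda/(\wt\beta^2\wt h^3)$; under $p=fp'$ the Weyl phase-space integral also picks up the same factor $f^2$. Since $A\mapsto\wt A$ is a bijection of admissible vector potentials, the infima transform by the overall factor $f^2$.

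Dividing through by $f^2$ reduces the claim to the unit-scale local semiclassical theorem applied to $(\wt\theta,\wt V,\wt\beta,\wt h)$; its asserted error of order $\wt h^{-2+1/11}$, after multiplication by $f^2$ and substitution of $\wt h=h/(\ell f)$, yields precisely $h^{-2+1/11}\ell^{2-1/11}f^{4-1/11}$, as required. The reduction itself is purely bookkeeping; the real obstacle, which will live in Section~\ref{sec:locsc}, is the unit-scale semiclassical asymptotics with explicit error rate $\wt h^{1/11}$ in the presence of a self-generated magnetic field. Because the relativistic symbol and the absence of a diamagnetic inequality for the Pauli operator preclude a direct coherent-state treatment, that analysis will have to combine the magnetic Lieb-Thirring estimate of Theorem~\ref{thm:LT} with a careful multiscale/coherent-state decomposition and a separate treatment of the region where $\wt V$ could be singular.
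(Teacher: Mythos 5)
Your reduction is exactly what the paper intends and all the bookkeeping checks out: the rescalings $\wt h = h/(\ell f)$, $\wt\beta = \beta f$, $\wt A(y) = f^{-1}A(\ell y)$ give $T_h(A) = f^2\,T_{\wt h}(\wt A)$ hence an overall factor $f^2$ for the kinetic term, the magnetic prefactor transforms to $f^2\cdot\lambda/(\wt\beta^2\wt h^3)$, the Weyl integral scales by $f^2$, and $f^2\,\wt h^{-2+1/11} = h^{-2+1/11}\ell^{2-1/11}f^{4-1/11}$. This is the same argument the paper compresses into one sentence after the statement of the theorem.
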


Theorem~\ref{thm:SemiclassLocScaled} follows from the unscaled version
 Theorem~\ref{thm:SemiclassLoc} below  with the rescaled variable
$x'=x/\ell$ and using the parameters $\beta'= \beta f$,
 $h' = h/(f\ell)$. $\;\;\Box$

\medskip

The multiscale analysis requires two scaling functions, $\ell(u)=\ell_u$ and $f(u)=f_u$
depending on $u\in {\mathbb R}^3$. They express the lengthscale and the
size of the potential around $u$, respectively. In our case
we define 
$$
\ell(u)=\ell_u: = \frac{1}{100}\sqrt{ r^2+ d(u)^2}, \qquad 
f(u)=f_u:= \min\{\ell_u^{-1/2}, \ell_u^{-2}\},
$$
where we recall the definition of $d(u)=d_{{\mathbf r}}(u)$ from \eqref{ddef}
and that $r=h^{3/2}$.
The function $\ell_u$ is essentially the distance from $u$ to the nearest nucleus,
regularized on scale $r$, i.e. $\ell_u$ and $d(u)$ are comparable if $d(u)\ge r/3$.
The scaling function $f^2_u$ is the size of the Thomas-Fermi potential $V^{TF}_{{\mathbf z}, {\mathbf r}}$
near the point $u$.
More precisely, it follows from \eqref{VTFbound} that
\begin{align}\label{VTFs}
      \big|\partial_x^n V^{TF}_{{\mathbf z}, {\mathbf r}}(x)\big| \le C_n
   f_u^2 \ell_u^{-|n|}, \quad n\in {\mathbb N}^3,
\end{align}
for any $x$ with $|x-u|\le \ell_u$ and $d(u)\ge r/3$.
Moreover,  $\ell(u)$ is a continuously differentiable function with $\|\nabla\ell\|_\infty<1$.

Fix a cutoff function $\theta\in C_0^\infty({\mathbb R}^3)$, $0\le \theta\le 1$,
 supported in the unit ball and satisfying $\int \theta^2=1$.
Define
$$
   \theta_u(x) : = \theta\Big(\frac{x-u}{\ell(u)}\Big)\sqrt{J(x,u)}\ell(u)^{3/2},
$$
where $J(x,u)$ is the Jacobian of the (invertible) map $u\to (x-u)/\ell(u)$.
Then Theorem 22 from \cite{SS} states that 
\begin{align}\label{partnorm}
     \int_{{\mathbb R}^3} \theta_u(x)^2\ell_u^{-3}du =1
\end{align}
 for any $x\in {\mathbb R}^3$ and $\| \partial^n \theta_u\|_\infty\le C_n\ell_u^{-|n|}$
for any multiindex $n$.

Inserting this partition of unity and reallocating the localization error, we have
$$
 \beta^{-2} T_h(\wt A)- C\beta^{-2}h^2W_{r,R}+\beta^{-4}
\ge  \int \theta_u \Big[ \beta^{-2}T_h(\wt A)+\beta^{-4}-C\beta^{-2}h^2\ell_u^{-2}\Big]\theta_u \; du,
$$
where we also used that $W_{r,R}\le C\ell_u^{-2}$.
Since $\beta\le C'h$ and
$\ell_u \ge r/100 = h^{3/2}/100$, we have $C\beta^{-2}h^2\ell_u^{-2} \le C \beta^{-2}h^{-1}
\ll\beta^{-4}$ and we can thus define
$$
   \wt \beta_u := \beta(1-Ch^2\beta^2\ell_u^{-2})^{-1/4} = \beta \big[1+ O\big(h^2\beta^2\ell_u^{-2}\big)\big].
$$
Using the monotonicity of the square root and the pull-out estimate
of Lemma~\ref{lem:Pull-out}, we get
\begin{align}
\sqrt{ \beta^{-2} T_h(\wt A)- C\beta^{-2}h^2W_{r,R}+\beta^{-4}}-\beta^{-2}
  \ge \int \theta_u \Big[ \sqrt{ \wt \beta_u^{-2} T_h(\wt A)+\wt\beta_u^{-4}}-\wt\beta_u^{-2} -Ch^2\ell_u^{-2}
 \Big]\theta_u.
\label{puu}
\end{align}
(Strictly speaking, the pull-out estimate was formulated for a countable partition of unity, but the
integration over $u$ can be approximated by a discrete sum up to arbitrary precision, and
we neglect this technicality.)
For any potential $U\ge0$ set
$$
  \cE(\wt A, U,\theta_u): = \tr\Bigg[\Phi_-\phi_+\theta_u
 \Big(\sqrt{ \wt\beta_u^{-2} T_h(\wt A)+\wt\beta_u^{-4}}-\wt\beta_u^{-2}
  - U \Big)\theta_u\phi_+\Phi_{-} \Bigg]_- +
 \frac{c_1\wt\lambda}{\beta^2h^3}\int_{B_u(2\ell_u)} |\nabla\otimes \wt A|^2
$$
with a sufficiently small universal constant $c_1$.
Define the region
\begin{align}
\label{def:Q}
   \cQ: = \{ u\; : \; |u|\le 2R, \; |u-r_k|\ge r/3,\; k=1,2, \ldots, M\}
\end{align}
which supports $\Phi_-\phi_+$. It is easy to check that  $\theta_u\Phi_-\phi_+=0$ for $u\in \cQ^c$,
in particular $ \cE(\wt A, U,\theta_u)\ge 0$ in this case.

Using \eqref{puu} and reallocating the field energy (this is why $c_1$ needs to be small) we 
obtain
\begin{align}
\tr\Bigg[\Phi_-\phi_+& \Big(\sqrt{ \beta^{-2} T_h(\wt A)- C\beta^{-2}h^2W_{r,R}+\beta^{-4}}-\beta^{-2}
  -  \wt  \kappa V^{TF}_{{\mathbf z}, {\mathbf r}} \Big)\phi_+\Phi_{-} \Bigg]_- +
 \frac{\wt\lambda}{\beta^2h^3}\int |\nabla\otimes \wt A|^2\nonumber\\
& \ge \int_\cQ \frac{du}{\ell_u^3}\cE(\wt A, V_u^+,\theta_u),
\end{align}
where
$$
     V_u^+ : =  \wt  \kappa V^{TF}_{{\mathbf z}, {\mathbf r}} +Ch^2\ell_u^{-2}.
$$
For $u\in \cQ$
the localization error $C h^2 \ell_u^{-2}$ can be bounded by $Cf_u^2$ since $h\le C\ell_uf_u$
holds as long as  $ch^{2}\le\ell_u\le Ch^{-1}$ and $\ell_u$ is comparable with $u$.
Using $\wt\kappa\le 2M/\pi$ and \eqref{VTFs}, we see that
$$
   |\partial^n_x V_u^+(x)| \le C_n f_u^2\ell_u^{-|n|}
 \quad \mbox{for any} \; x\in \mbox{supp}(\theta_u), \quad u\in \cQ, \quad n\in {\mathbb N}^3.
$$
One can also easily check that
\begin{align}\label{thet}
    |\partial^n(\theta_u\Phi_-\phi_+)|\le C_n\ell_u^{-|n|}, \quad u\in \cQ.
\end{align} 

Noticing that $f_u\ell_u^2\le 1$ and that the condition $\wt\beta_u f_u\ell_u^2\le C' h$ is satisfied
by $\beta\le C'h$ (upon changing the value of $C'$),
 we can now apply Theorem~\ref{thm:SemiclassLocScaled}
to evaluate $\cE(\wt A, V_u^+,\theta_u)$:
\begin{align}
\int_\cQ &\frac{du}{\ell_u^3}  \cE(\wt A, V_u^+,\theta_u) \label{cEEE}\\
& \ge\int_\cQ \frac{du}{\ell_u^3} \Big[
 \frac{2}{(2\pi h)^3} \iint [(\theta_u\Phi_-\phi_+)(x)]^2 
\Big[\frac{1}{2} p^2-V_u^+(x)\Big]_{-} dx dp 
- C h^{-2+1/11} \ell_u^{2-1/11} f_u^{4-1/11}\Big].
\nonumber
\end{align}
The second term is of order $h^{-2+1/44}$, hence negligibile, since
$$
   \int_\cQ \frac{du}{\ell_u^3}\ell_u^{2-1/11} f_u^{4-1/11}
  \le \int_\cQ \min\{ \ell_u^{-3-1/22}, \ell_u^{-9+1/11}\}du \le C  r^{-1/22}
\le C h^{-3/44},
$$
using that $\ell_u\sim d(u)\ge r/3= h^{-3/2}/3$.

The double $dxdp$ integral in the leading term is of order
$$
 \int [(\theta_u\Phi_-\phi_+)(x)]^2 \big[ V_u^+(x)\big]_+^{5/2} dx
 = \int [(\theta_u\Phi_-\phi_+)(x)]^2 
\big[ \wt  \kappa V^{TF}_{{\mathbf z}, {\mathbf r}} +Ch^2\ell_u^{-2}\big]_+^{5/2} dx,
$$
which can be bounded by
\begin{align}
    (1+\epsilon)\int [(\theta_u\Phi_-\phi_+)(x)]^2
 \big[  \wt  \kappa V^{TF}_{{\mathbf z}, {\mathbf r}}\big]_+^{5/2} dx
   + C\epsilon^{-3/2}\int_{|x|\le CR} \Big[\frac{h^2}{ d(x)+r}\Big]^{5/2} dx
\nonumber\\
\label{eee}
\le \int [(\theta_u\Phi_-\phi_+)(x)]^2
 \big[  \wt  \kappa V^{TF}_{{\mathbf z}, {\mathbf r}}\big]_+^{5/2} dx
 + C\epsilon + C\epsilon^{-3/2}h^5R^{1/2} 
\end{align}
since $ V^{TF}_{{\mathbf z}, {\mathbf r}} \in L^{5/2} ({\mathbb R}^3)$
and $\ell_u$ is comparable with $d(x)+r$ for $x\in\mbox{supp} \; \theta_u$.
Optimizing for $\epsilon = h^2 R^{1/5} =h^{2-1/5}$, we see that the two
error terms in \eqref{eee}, after multiplying them with $h^{-3}$ 
and integrating over $\int_{\cQ} \ell_u^{-3} du$, are
of order $h^{-1-1/5}|\log h|$, i.e. negligible.

In \eqref{cEEE} we can thus replace $V_u^+$ with $ \wt  \kappa V^{TF}_{{\mathbf z}, {\mathbf r}}$
modulo irrelevant errors and we get
\begin{align} \nonumber
\int_\cQ \frac{du}{\ell_u^3}  \cE(\wt A, V_u^+,\theta_u) 
 & \ge\int_\cQ \frac{du}{\ell_u^3} 
 \frac{2}{(2\pi h)^3} \iint [(\theta_u\Phi_-\phi_+)(x)]^2 \Big[\frac{1}{2} p^2- 
\wt  \kappa V^{TF}_{{\mathbf z}, {\mathbf r}} (x)\Big]_{-} dx dp  -  Ch^{-2+1/44}
\\ 
& \ge  
 \frac{2}{(2\pi h)^3} \iint [(\Phi_-\phi_+)(x)]^2 \Big[\frac{1}{2} p^2- 
\wt  \kappa V^{TF}_{{\mathbf z}, {\mathbf r}} (x)\Big]_{-} dx dp  -  Ch^{-2+1/44}
\nonumber
\end{align}
after extending the $du$ integration to ${\mathbb R}^3$
and then performing it by using \eqref{partnorm}.

Finally, we can remove the cutoff function $\Phi_-$ in the last term by computing
that
\begin{align}
  \Bigg| \frac{2}{(2\pi h)^3} \iint &  (1-\Phi_-(x)^2)\phi_+(x)^2 \Big[\frac{1}{2} p^2- 
\wt  \kappa V^{TF}_{{\mathbf z}, {\mathbf r}} (x)\Big]_{-} dx dp\Bigg| \nonumber \\
&  \le Ch^{-3} \int \big[V^{TF}_{{\mathbf z}, {\mathbf r}}\big]^{5/2}  {\bf 1}(|x|\ge R)
\nonumber\\
& \le Ch^{-3} R^{-7} = Ch^4,
\nonumber
\end{align}
which is negligible.

Thus the final result for the second term in \eqref{eq:5} is
that for any $\wt\lambda>0$ and any admissible $\wt A$ we have
\begin{align}\nonumber
\tr \Big[\Phi_{-} \phi_+ &\Big( \sqrt{\beta^{-2} T_{h}(\wt A)-
  C \beta^{-2} h^2 W_{r,R}+ \beta^{-4}} - \beta^{-2} - \wt
  \kappa V^{TF}_{{\mathbf z}, {\mathbf r}} \Big)\phi_+\Phi_{-}\Big]_-
  + \frac{\wt\lambda}{\beta^2h^3}\int |\nabla\otimes \wt A|^2\\
&\ge   \frac{2}{(2\pi h)^3} \iint \phi_+(x)^2
\Big[ \frac{1}{2}p^2 -  \wt \kappa V^{TF}_{{\mathbf z}, {\mathbf r}}(x)
 \Big]_{-} \,dx dp -  Ch^{-2+1/44}.
\label{fin5}
\end{align}

\subsection{The region near the nuclei}
Here we will consider the first term in \eqref{eq:5}.
We consider each of the finitely many summands individually. 
Without loss of generality, we may assume that $r_k=0$, so we study 
\begin{align}
&\tr \Big[  \theta_{-}(|x|/r) \Big( \sqrt{\beta^{-2} T_{h}(\wt A)-
  C \beta^{-2} h^2 r^{-2} + \beta^{-4}} - \beta^{-2} - \wt
  \kappa V^{TF}_{ {\mathbf z}, {\mathbf r}} \Big)  \theta_{-}(|x|/r) \Big]_{-} \nonumber \\
 &\qquad 
+h^{1/4} \frac{\lambda}{\beta^2 h^3} \int |\nabla \otimes \wt A|^2,
\end{align}
where we also borrowed a small fraction $h^{1/4}$ of the magnetic energy.

Define 
$$
\wt \beta :=\beta (1- C h^2 \beta^2 r^{-2})^{-1/4} \geq \beta.
$$
Then we have 
\begin{align}
0\leq \beta^{-2} - \wt \beta^{-2} \leq C h^2 r^{-2} \leq C h^2 r^{-1} V^{TF}_{ {\mathbf z}, {\mathbf r}}(x),
\end{align}
on $|x|\leq r\le r_0/2$. Here we used from \eqref{VTFbound1} that for a small positive constant $c$
(depending on ${\mathbf z}$), we have
$$
V^{TF}_{ {\mathbf z}, {\mathbf r}}(x) \ge \frac{z_k}{|x|} - C \ge \frac{c}{|x|}
$$
for $|x| \leq r\ll 1$.
So we have
\begin{align}
\tr& \Big[  \theta_{-}(|x|/r) \Big( \sqrt{\beta^{-2} T_{h}(\wt A)-
  C \beta^{-2} h^2 r^{-2} + \beta^{-4}} - \beta^{-2} - \wt
  \kappa V^{TF}_{ {\mathbf z}, {\mathbf r}} \Big)  \theta_{-}(|x|/r) \Big]_{-} \nonumber \\
  &\geq
  \tr \Big[ \theta_{-}(|x|/r) \Big( \sqrt{\wt \beta^{-2} T_{h}(\wt A)+ \wt \beta^{-4}} - \wt \beta^{-2} -
\frac{\mu}{|x|} \Big)  \theta_{-}(|x|/r) \Big]_{-},
\nonumber
\end{align}
with
$$
   \mu: = ( \wt
  \kappa + C h^2 r^{-1})(z_k+Cr),
$$
where we used 
 the estimate 
$$
   V^{TF}_{ {\mathbf z}, {\mathbf r}}(x)\le \frac{z_k}{|x|}+ C\le \frac{z_k+Cr}{|x|}
$$
for $|x|\le r$
from \eqref{VTFbound1}.
By scaling $x=h^2\mu^{-1} y$, this becomes
\begin{align}
  \tr &\Big[  \theta_{-}(|x|/r) \Big( \sqrt{\wt \beta^{-2} T_{h}(\wt A)+ \wt \beta^{-4}} - \wt \beta^{-2} -
 \frac{\mu}{|x|} \Big)  \theta_{-}(|x|/r) \Big]_{-} \nonumber \\
  &\geq  \frac{\mu^2}{h^2} 
    \tr \Big[  \theta_{-}(|y|/{\mathcal R}) \Big( \sqrt{\wt \alpha^{-2} T_{h=1}(\bar A)+ \wt \alpha^{-4}}
 - \wt \alpha^{-2} -\frac{1}{|y|} \Big)  \theta_{-}(|y|/{\mathcal R}) \Big]_{-}
\end{align}
with 
\begin{align}
\wt \alpha: =  \wt \beta h^{-1} \mu,
\quad {\mathcal R}:= \frac{r}{h^2} \mu,\qquad
\bar{A}(y): = \frac{h}{\mu} \wt A(h^2 y/\mu).
\end{align}
Notice that
$$
\mu \int |\nabla \otimes \bar A(y)|^2 \,dy = \int |\nabla \otimes \wt A(x)|^2 \,dx,
$$
therefore 
\begin{align}
\frac{\lambda h^{1/4}}{\beta ^2 h^3}  \int |\nabla \otimes \wt A(x)|^2 \,dx
= \frac{\mu^2}{h^2} \Big\{ \Lambda \int |\nabla \otimes \bar A(y)|^2 \,dy\Big\},
\end{align}
with
$$
\Lambda = \frac{\lambda h^{1/4}}{\beta^2 h \mu}\gg 1.
$$
So with the choice $r=h^{3/2}$ we have (using $\beta \leq h$
and $\wt \kappa z_k\le C$),
\begin{align}
{\mathcal R}^5/\Lambda \leq C h^{1/4},
\end{align}
where the constant depends on $\lambda$.
With this choice we also have
$$
  \wt\alpha = \wt \beta h^{-1} \mu= \beta h^{-1}( \wt
  \kappa + C h^2 r^{-1})(z_k+Cr)  (1 - C h^2 \beta^2 r^{-2})^{-1/4}
   = \beta h^{-1} \wt \kappa z_k + O(h^{1/2}).
$$
So using Lemma~\ref{lem:ScottFunction} below and the continuity of $S_2$, we get,
after rescaling to the original coordinates that
\begin{align}
&\liminf_{h\rightarrow 0} h^2 \Bigg\{\tr \Big[  \theta_{-}(|x|/r) \Big( \sqrt{\beta^{-2} T_{h}(\wt A)-
  C \beta^{-2} h^2 r^{-2} + \beta^{-4}} - \beta^{-2} - 
\wt  \kappa V^{TF}_{ {\mathbf z}, {\mathbf r}} \Big)  \theta_{-}(|x|/r) \Big]_{-} \nonumber \\
&\qquad \qquad\qquad
+h^{1/4} \frac{\lambda}{\beta^2 h^3} \int |\nabla \otimes \wt A|^2
-\frac{2}{(2\pi h)^3} \iint\theta_-^2(|x|/r) \Big[\frac{1}{2} p^2 - \frac{\mu}{|x|}
\Big]_{-} dxdp \nonumber\\
&\qquad \qquad\qquad\label{fin61}
 - 2 (\wt \kappa z_k)^2 S_2(\beta h^{-1} \wt \kappa z_k )\Bigg\}\ge 0.
\end{align}
We can replace $\mu/|x|$ with $\wt  \kappa V^{TF}_{ {\mathbf z}, {\mathbf r}}(x)$ in the semiclassical formula,
the error is of order 
$$
  \Bigg|\int\theta_-^2(|x|/r) \Bigg\{ \Big[\frac{\mu}{|x|}\Big]^{5/2} - 
\big[ \wt  \kappa V^{TF}_{ {\mathbf z}, {\mathbf r}}(x)\big]^{5/2} \Bigg\} dx\Bigg|
\le C r^{1/2}h^{1/2} + C r^{3/2} \le Ch^{5/4},
$$
which is  negligible,
where we used \eqref{VTFbound1} and that $\mu = \wt\kappa z_k + O(h^{1/2})$.
After this replacement, we can sum up \eqref{fin61} for each $k$
to obtain the
final result of this section:

\begin{align}
&\liminf_{h\rightarrow 0} h^2 \sum_{k=1}^M  \Bigg\{ \tr \Big[  \theta_{r,k}   \Big( \sqrt{\beta^{-2} T_{h}(\wt A)-
  C \beta^{-2} h^2 r^{-2} + \beta^{-4}} - \beta^{-2} - 
\wt  \kappa V^{TF}_{ {\mathbf z}, {\mathbf r}} \Big)  \theta_{r,k} \Big]_{-} \nonumber \\
&\qquad \qquad\qquad\label{fin6}
+h^{1/4} \frac{\lambda }{\beta^2 h^3} \int |\nabla \otimes \wt A|^2
-\frac{2}{(2\pi h)^3} \iint\theta_{r,k}^2(x) \Big[\frac{1}{2} p^2 - \wt  \kappa V^{TF}_{ {\mathbf z}, {\mathbf r}}
\Big]_{-} dxdp \nonumber\\
& \qquad \qquad\qquad
 -  2 (\wt \kappa z_k)^2 S_2(\beta h^{-1} \wt \kappa z_k)\Bigg\} \ge 0.
\end{align}

\bigskip

Combining the estimates \eqref{eq:3}, \eqref{fin5} and \eqref{fin6}
on the three terms in \eqref{eq:5}
and recalling  \eqref{phimin},
 we immediately obtain 
\eqref{eq:scottcorr}. This completes the proof of
Theorem~\ref{thm:Thm1.4bis}.  $\;\;\Box$

\section{The Scott region}\label{sec:Scott}

In this section we fix a non-negative cutoff function $\phi: {\mathbb R}^3 \rightarrow [0,1]$
 with support on the unit ball $B(1)$ and such that $\phi \equiv 1$ on $B(1/2)$, the ball of
 radius $1/2$. Set $\phi_r(x) :=\phi(x/r)$ for any $r>0$.
Define, for $R, \Lambda>0$ and $\alpha \in (0,2/\pi)$, 
\begin{align}
{\mathcal E}_{R,\alpha, \Lambda}(A) &= \tr\Big[ \phi_{R}\big( \sqrt{ \alpha^{-2} T_{h=1}(A) + \alpha^{-4}}- \alpha^{-2}
- \frac{1}{|x|}   \big)\phi_{R}\Big]_{-}
+ \Lambda \int |\nabla \otimes A|^2\nonumber \\
&\quad-\frac{2}{(2\pi)^3} \iint\phi_{ R}^2(x) \Big[\frac{1}{2} p^2 - \frac{1}{|x|}\Big]_{-} dxdp
\end{align}
and 
\begin{align}
E(R, \alpha, \Lambda)
=
\inf_A {\mathcal E}_{R,\alpha,\Lambda}(A).
\end{align}

Clearly,
$E(R,\alpha,\Lambda) \leq   {\mathcal E}_{R,\alpha,\Lambda}(A=0)$, 
and we know from \cite[Lemma~4.3]{SSS} that
 $ {\mathcal E}_{R,\alpha,\Lambda}(A=0)$ tends to the non-magnetic, 
relativistic Scott term $2 S_2(\alpha)$ (the factor $2$ being due to the spin degrees of freedom).

\begin{lemma}\label{lem:ScottFunction}
Fix $\alpha_0 \in (0,2/\pi)$. We take the limits $R, \Lambda \rightarrow \infty$ in such a way that
 $R^5 /\Lambda \rightarrow 0$. Then we have
\begin{align}\label{simlim}
\lim_{R, \Lambda \rightarrow \infty, R^5/\Lambda \rightarrow 0} E(R, \alpha, \Lambda) = 2 S_2(\alpha),
\end{align}
where the limit is uniform in $\alpha \leq \alpha_0$.
\end{lemma}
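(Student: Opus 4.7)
The upper bound is immediate: with $A\equiv 0$ one has $E(R,\alpha,\Lambda)\le \mathcal{E}_{R,\alpha,\Lambda}(A=0)$, and \cite[Lemma~4.3]{SSS} states that this quantity tends to $2S_2(\alpha)$ as $R\to\infty$, uniformly in $\alpha\le\alpha_0$. So only the lower bound needs work.

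My strategy for the lower bound proceeds in two steps. First, I use Theorem~\ref{thm:CritStability} with $V=0$ and $r=R$ to deduce that any near minimizer must have negligibly small magnetic energy under the assumption $R^5/\Lambda\to 0$; then I compare the full magnetic functional with its $A=0$ counterpart at a shifted parameter $\tilde\alpha$ close to $\alpha$, producing only an $o(1)$ error. Set $\eta_0:=\tfrac{1}{10}(1-(\pi\alpha_0/2)^2)>0$. Since $\alpha\le\alpha_0$, the corresponding $\eta$ in Theorem~\ref{thm:CritStability} satisfies $\eta\ge\eta_0$, and together with the elementary computation $\bigl|\iint\phi_R^2[\tfrac{1}{2}p^2-1/|x|]_-\,dx\,dp\bigr|=O(R^{1/2})$ one obtains
\[
\mathcal{E}_{R,\alpha,\Lambda}(A)\ge(\Lambda-C\eta_0^{-3/2})\int|\nabla\otimes A|^2 - C R^3 + O(R^{1/2}),
\]
with $C$ independent of $\alpha\le\alpha_0$. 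The upper bound gives $E(R,\alpha,\Lambda)=O(1)$, so any admissible $A$ with $\mathcal{E}_{R,\alpha,\Lambda}(A)\le E(R,\alpha,\Lambda)+1$ must satisfy
\[
\int|\nabla\otimes A|^2 \le C(1+R^3)/\Lambda = O(R^{-2}\cdot R^5/\Lambda) = o(1)
\]
as $R,\Lambda\to\infty$ with $R^5/\Lambda\to 0$.

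For the comparison step, the Pauli identity $T(A)=(p+A)^2+\sigma\cdot B$ with $p=-i\nabla$ together with Cauchy-Schwarz yields, for any $\delta\in(0,1)$, the operator inequality $T(A)\ge(1-\delta)T(0)-\delta^{-1}A^2-|B|$. Setting $\tilde\alpha:=\alpha(1-\delta)^{-1/2}$ and $W_\delta:=\alpha^{-2}(\delta^{-1}A^2+|B|)$, this gives
\[
\alpha^{-2}T(A)+\alpha^{-4}\ge\tilde\alpha^{-2}T(0)+\tilde\alpha^{-4}-W_\delta.
\]
Operator monotonicity of the square root then produces a corresponding lower bound for $\sqrt{\alpha^{-2}T(A)+\alpha^{-4}}$; to pull $W_\delta$ out of the square root I use the integral representation $\sqrt{X}=\pi^{-1}\int_0^\infty X(X+s)^{-1}s^{-1/2}\,ds$ and the resolvent identity to express the $W_\delta$-defect as a perturbation whose contribution to the negative-spectrum trace is controlled by the Daubechies-type bound of Theorem~\ref{thm:LT} applied with $W_\delta$ as the potential. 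A judicious choice $\delta=\delta(R,\Lambda)\to 0$, together with Sobolev embedding and $\int|\nabla\otimes A|^2=o(1)$, makes the resulting trace error $o(1)$ uniformly in $\alpha\le\alpha_0$. What remains is the $A=0$ functional at parameter $\tilde\alpha$, and \cite[Lemma~4.3]{SSS} at $\tilde\alpha$ combined with the continuity of $S_2$ yields the desired limit $2S_2(\alpha)$.

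The main obstacle is the operator comparison just described. The Sobolev bound $\|A\|_{L^6}=O(\Lambda^{-1/2})$ does not directly yield smallness of $\|A\|_{L^2(B(R))}$ because the ball grows with $R$, while the factor $\delta^{-1}$ multiplying $A^2$ in $W_\delta$ blows up as $\delta\to 0$. Carefully balancing $\delta$ against $R$ and $\Lambda$ under $R^5/\Lambda\to 0$, and tracking the uniformity in $\alpha\le\alpha_0$ throughout the integral-representation argument for $\sqrt{X}-\sqrt{X-W_\delta}$, is the delicate point.
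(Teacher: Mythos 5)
Your upper bound and the a~priori field-energy bound (Step~1) are exactly what the paper does: Theorem~\ref{thm:CritStability} with $V=0$, uniformity via $\eta\ge\eta_0$, and $\int|\nabla\otimes A|^2\le C R^3/\Lambda=o(1)$. The divergence begins after that and it is where the gap lies.

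First, you skip the localization of $A$. The paper first shifts $A$ by its average on $B(2R)$ and replaces it by $\tilde A=\tilde\chi_1(\cdot/R)A$, using IMS, the pull-out lemma, and Poincar\'e to get back a compactly supported field with $\int|\nabla\otimes\tilde A|^2\le C'\int|\nabla\otimes A|^2$. This step is not cosmetic: in your argument the quantities $\int A^p$ (for CLR, H\"older, Sobolev, and for the Daubechies potentials $\int W_\delta^{5/2}$, $\int W_\delta^4$) only make sense and have the right $R$-dependence if $A$ is supported on a ball of radius $O(R)$. The Sobolev bound $\|A\|_{6}=O((R^3/\Lambda)^{1/2})$ together with H\"older on $B(2R)$ is precisely what is needed, and it requires that localization.

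Second, your operator comparison keeps $W_\delta=\alpha^{-2}(\delta^{-1}A^2+|B|)$ as a residual potential inside the square root and then proposes to ``pull $W_\delta$ out of the square root'' via the integral representation and resolvent identity, controlling the defect with Theorem~\ref{thm:LT}. This is the step you yourself flag as delicate, and it is genuinely not worked out: $\sqrt{X-W_\delta}$ versus $\sqrt{X}$ is a nontrivial operator perturbation (one even needs $X-W_\delta\ge0$ to make sense of the left side), and Theorem~\ref{thm:LT} estimates $\tr[\sqrt{\cdot}-\cdot-V]_-$, not a difference of square roots. The paper sidesteps the whole issue. Using the Pauli algebra directly, $T(\tilde A)=[\sigma\cdot(p+\tilde A)]^2\ge(1-\epsilon)(-\Delta)-\epsilon^{-1}\tilde A^2$ \emph{without} any $|B|$ term (Cauchy--Schwarz on $\sigma\cdot p$ and $\sigma\cdot\tilde A$, then $(\sigma\cdot\tilde A)^2=\tilde A^2$), the paper writes this as $-(1-2\epsilon)\Delta+\epsilon(-\Delta-\epsilon^{-2}\tilde A^2)$ and uses the CLR inequality plus the a~priori field bound and the choice $\epsilon=\mu R^2\Lambda^{-1/2}$ to conclude $-\Delta-\epsilon^{-2}\tilde A^2\ge0$. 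Then the magnetic contribution is \emph{exactly} non-negative, nothing is left under the square root except $(1-2\epsilon)$ times the free Laplacian, and the comparison with the $A=0$ functional at shifted parameters $\gamma$, $\bar R$ (plus a Daubechies step to control $\alpha^{-2}-\gamma^{-2}$) is clean. Your $|B|$ term is also unnecessary baggage: keeping it forces you through the full Daubechies estimate rather than a single CLR sign condition. To complete your route you would need to (i) insert the localization step, and (ii) replace the unproved ``pull $W_\delta$ out'' step by a concrete mechanism; the simplest such mechanism is precisely the paper's sign argument, and there is no evident simpler alternative.
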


\begin{proof}
As mentioned above, the upper bound follows by taking $A=0$ and using \cite[Lemma~4.3]{SSS}.
We proceed to give the lower bound.

\noindent
{\bf Step 1: A priori bound on the field energy.}
Theorem~\ref{thm:CritStability} with $V=0$ yields 
\begin{align}
\tr \Big[ \phi_{R} \big( \sqrt{\alpha^{-2}T_{h=1}(A) + \alpha^{-4}} - \alpha^{-2} - 1/|x|\big)  \phi_{R} \Big]_{-} 
\geq -C \Big\{ \int |\nabla \otimes A|^2 + R^3 \Big\},
\end{align}
with a constant $C$ that only depends on  $\frac{2}{\pi} - \alpha>0$,
i.e. on the distance of $\alpha$ from its critical value $\frac{2}{\pi}$.
Notice that the Weyl term also satisfies a similar bound,
\begin{align}\label{eq:ClassicalTerm}
\Big| \iint \phi_{R}^2(x) \left[\frac{1}{2} p^2 -\frac{\kappa}{|x|} \right]_{-} \,dxdp\Big|
= C_{\phi} \kappa^{5/2} R^{1/2} ,
\end{align}
for some constant $C_{\phi}$ only depending on $\phi$.

Inserting these bounds in ${\mathcal E}_{R,\alpha,\Lambda}$ we get for any
 $A$ with ${\mathcal E}_{R,\alpha,\Lambda}(A) \leq {\mathcal E}_{R,\alpha,\Lambda}(A=0)$ 
 that
\begin{align}\label{compareSSS}
\Big( \Lambda - C \Big) \int |\nabla \otimes A|^2 -C (R^3+R^{1/2})
\leq {\mathcal E}_{R,\alpha,\Lambda}(A=0).
\end{align}
We know from \cite{SSS} that ${\mathcal E}_{R,\alpha,\Lambda}(A=0)$
tends to $2S_2(\alpha)$ and $S_2$ is bounded by $1/4$.
In particular, the right hand side of \eqref{compareSSS}
is bounded by some constant $C$ for large values of $R$.
So we get for all $\Lambda, R$ sufficiently large
that
\begin{align}\label{eq:MagnApriori}
\int |\nabla \otimes A|^2 \leq C R^3/\Lambda.
\end{align}

\noindent
{\bf Step 2: Localization of the vector potential.} 
We start by localizing the vector potential $A$. 
Suppose that $A$ satisfies that ${\mathcal E}_{R,\alpha,\Lambda}(A) \leq {\mathcal E}_{R,\alpha,\Lambda}(A=0)$.
We may add a constant to $A$ without changing ${\mathcal E}_{R,\alpha,\Lambda}(A)$.
So we will assume that
$$
\int_{B(2R)} A \, dx = 0.
$$
Let $\chi_1, \chi_2 \in C^{\infty}({\mathbb R}^3)$ be a partition of unity satisfying
\begin{align}
\chi_1^2 + \chi_2^2 =1,\qquad \chi_1 = 1 \text{ on } B(1),\qquad \supp \chi_1 \subset B(2).
\end{align}
Define $\chi_{j, R}(x) = \chi_j(x/R)$, $j=1,2$.
Let also $\tilde \chi_1 \in C_0^{\infty}(B(2))$ with $\tilde \chi_1 = 1$ on $\supp \chi_1$.
We define 
$$
\tilde A(x) = \tilde \chi_1(x/R) A(x).
$$
With this notation we get from the IMS-formula (and since $\wt \chi_1 \chi_1 = \chi_1$)
\begin{align}
\alpha^{-2} T_{h=1}(A) + \alpha^{-4} 
&\geq \chi_{1, R} \big[ \alpha^{-2} T_{h=1}(\tilde A) -C \alpha^{-2} R^{-2} + \alpha^{-4} \big] \chi_{1, R}\nonumber \\
&\quad+
\chi_{2, R} \big[ \alpha^{-2} T_{h=1}(A) -C \alpha^{-2} R^{-2} + \alpha^{-4} \big] \chi_{2, R}.
\end{align}
Using the operator monotonicity of the square root, the pull-out estimate of Lemma~\ref{lem:Pull-out}
and that $\phi_R\chi_{1,R}=\phi_R$, $\phi_R\chi_{2,R}=0$,
we therefore have 
\begin{align}
\phi_{R} \sqrt{\alpha^{-2}T_{h=1}(A) + \alpha^{-4}} \phi_{R} 
&\geq
\phi_{R} \sqrt{\alpha^{-2}T_{h=1}(\tilde A) -C \alpha^{-2}R^{-2} + \alpha^{-4}} \phi_{R}\nonumber\\
&\geq  \phi_{R} \sqrt{\gamma^{-2}T_{h=1}(\tilde A) + \gamma^{-4}} \phi_{R},
\end{align}
where
\begin{align}
  \label{eq:1}
  \gamma = \alpha (1 - C\alpha^2 R^{-2})^{-1/4} \geq \alpha.
\end{align}
 
\noindent
{\bf Step 3: Removing the magnetic field.}
To continue the lower bound, we estimate
\begin{align}\label{eq:lowerDelta}
T_{h=1}(\tilde A) 
\geq -(1-2\epsilon) \Delta + \epsilon\Big( -\Delta - \epsilon^{-2} \tilde A^2\Big)
\end{align}
with some $\epsilon\in (0,1/2)$ to be determined later.
By the CLR-inequality $${-\Delta - \epsilon^{-2} \tilde A^2\geq 0}$$ if 
\begin{align}\label{eq:CondPos}
C_{\rm CLR} \int (\epsilon^{-2} \tilde A^2)^{3/2}  < 1,
\end{align}
where $C_{\rm CLR}$ is an explicit constant in the CLR inequality.
By the Cauchy-Schwarz and Sobolev inequalities,
and using that $\tilde A$ is supported on $B(2R)$, we obtain
\begin{align}\label{eq:EST-A3}
\int (\epsilon^{-2} \tilde A^2)^{3/2} & = \epsilon^{-3} 
 \Big(\int_{B(2R )} 1 \Big)^{1/2}
\Big(\int \tilde A^6 \Big)^{1/2}  \leq C \epsilon^{-3} R^{3/2}  \Big( \int |\nabla \otimes \tilde A|^2 \Big)^{3/2}.
\end{align}
We can continue the estimates using the Poincare inequality (since $\int_{B(2R)} A\,dx =0$).
\begin{align*}
\int |\nabla \otimes \tilde A|^2& \leq \int  |\nabla \otimes A|^2 + |\nabla \wt \chi_1(\cdot/R)|^2 A^2
\leq \int  |\nabla \otimes A|^2 + C  R^{-2} \int_{B(2R)} A^2 \nonumber \\
&\leq C' \int |\nabla \otimes A|^2.
\end{align*}
So we may replace $\tilde A$ by $A$ in \eqref{eq:EST-A3}.

Upon inserting this estimate in \eqref{eq:EST-A3} and using \eqref{eq:MagnApriori}, 
we see that the condition \eqref{eq:CondPos}
is satisfied if we take
\begin{align}\label{eq:ChoiceEpsilon}
\epsilon = \mu R^2 \Lambda^{-1/2}
\end{align}
with a sufficiently large constant $\mu$. Clearly $\epsilon\in (0,1/2)$
can be achieved in the limit
considered in \eqref{simlim}.

With the choice of $\epsilon$ from \eqref{eq:ChoiceEpsilon} and using \eqref{eq:lowerDelta} and the
 operator monotonicity of the square root, we have
\begin{align}\label{eq:NonMagnScott}
{\mathcal E}_{R,\alpha,\Lambda}(A) 
&\geq
\tr \Big[ \phi_{R} \big( \sqrt{\gamma^{-2}(1-2\epsilon)(-\Delta) + \gamma^{-4}}
 - \alpha^{-2}  - 1/|x|\big)  \phi_{R} \Big]_{-} \nonumber \\
&\quad -\frac{2}{(2\pi)^3}\iint \phi_{R}^2(x) \left[\frac{1}{2} p^2 -\frac{1}{|x|} \right]_{-} \,dxdp.
\end{align}
We perform the scaling $y = (1-2\epsilon)^{-1/2} x$ in order to absorb the 
factor $(1-2\epsilon)$. With the new parameter
$$
\widetilde R = (1-2\epsilon)^{-1/2} R,
$$
we get
\begin{align} 
&\tr \Big[ \phi_{R} \big( \sqrt{\gamma^{-2}(1-2\epsilon)(-\Delta) +
 \gamma^{-4}} - \alpha^{-2}  - 1/|x|\big)  \phi_{R} \Big]_{-} 
\nonumber \\
&=
\tr \Big[ \phi_{\wt R } \Big( \sqrt{\gamma^{-2}(-\Delta) +
  \gamma^{-4}} - \gamma^{-2} -(\alpha^{-2}-\gamma^{-2})
  - \frac{\sqrt{1-2\epsilon}}{|x|}\Big)  \phi_{\wt R} \Big]_{-} 
\end{align}
We use the Lieb-Thirring inequality Theorem~\ref{thm:LT} (in this case Theorem~\ref{thm:LT} is the usual Daubechies inequality) to control
the difference $(\alpha^{-2}-\gamma^{-2})$. 
For this we will use a small  $\delta$-part of 
the kinetic energy (in the end we will make the optimal choice $\delta = R^{-1}$).
Since 
$$
0 \leq \alpha^{-2}-\gamma^{-2} \leq C R^{-2},
$$
we get
\begin{align}
  \label{eq:2}
  \tr\Big[ &\phi_{\wt R }\big( \delta(  \sqrt{\gamma^{-2}(-\Delta) +
  \gamma^{-4}} - \gamma^{-2}) - (\alpha^{-2}-\gamma^{-2}) \big)
\phi_{\wt R } \Big]_{-}
\nonumber \\
&\geq
-C \int_{\{|x|\leq \wt R \}} \Big(\delta^{-3/2}(\alpha^{-2}-\gamma^{-2})^{5/2} +
\gamma^3 \delta^{-3}(\alpha^{-2}-\gamma^{-2})^4\Big) \nonumber \\
&\geq
-C (\delta^{-3/2}R^{-2} + \delta^{-3} R^{-5}).
\end{align}
With the choice $\delta = R^{-1}$ this term is estimated as $C R^{-1/2} $.

For the main term, containing $(1-\delta)$-part of the kinetic
energy and the Coulomb potential,
by scaling $x=\frac{1-\delta}{\sqrt{1-2\epsilon}} y$, we have
\begin{align}
(1-\delta)& \tr\Big[ \phi_{\wt R }\big(\sqrt{\gamma^{-2}(-\Delta) +
  \gamma^{-4}} - \gamma^{-2} - \frac{(1-\delta)^{-1}\sqrt{1-2\epsilon}}{|x|}\big)
\phi_{\wt R } \Big]_{-} \nonumber \\
&= \frac{1-2\epsilon}{1-\delta} \tr\Big[ \phi_{\bar R }
\big(\sqrt{\wt \gamma^{-2}(-\Delta) +
  \wt \gamma^{-4}} 
  - \wt \gamma^{-2} - \frac{1}{|x|}\big)
\phi_{\bar R } \Big]_{-},
\end{align}
with $\bar R = \wt R \frac{\sqrt{1-2\epsilon}}{1-\delta}$
and $\wt \gamma = \gamma \frac{\sqrt{1-2\epsilon}}{1-\delta}$.

Notice that the classical (Weyl) terms satisfy \eqref{eq:ClassicalTerm} and therefore,
\begin{align}
&\Big|\iint \phi_{R}^2(x) \left[\frac{1}{2} p^2 -\frac{1}{|x|} \right]_{-} \,dxdp
-
\frac{1-2\epsilon}{1-\delta}
\iint \phi_{\bar R}^2(x) \left[\frac{1}{2} p^2 -\frac{1}{|x|} \right]_{-} \,dxdp\Big|\nonumber \\
&=
C_{\phi} \Big| R^{1/2} - \frac{1-2\epsilon}{1-\delta} \bar R^{1/2}\Big|
\leq C  R^{1/2} [\epsilon+\delta] = o(1),
\end{align}
using the choice $\delta = R^{-1}$ and 
by the choice of $\epsilon$ and since $R^{1/2} \epsilon \rightarrow 0$ under the limit taken in \eqref{simlim}.

So we can insert the above estimates into \eqref{eq:NonMagnScott} to get
\begin{align}
{\mathcal E}_{R,\alpha,\Lambda}(A) &\geq \frac{1-2\epsilon}{1-\delta}{\mathcal E}_{\bar R,\wt \gamma, \Lambda}(A=0)
- C R^{1/2} [\epsilon+\delta]  - C (\delta^{-3/2}R^{-2} + \delta^{-3} R^{-5}).
\end{align}
Since we know from \cite{SSS} that ${\mathcal E}_{\bar R,\wt \gamma, \Lambda}(A=0) \rightarrow 2 S_2(\alpha)$
this finishes the proof using the previously mentioned choices $\delta = R^{-1}$ and $\epsilon$ from \eqref{eq:ChoiceEpsilon}.
\end{proof}

\section{Local semiclassics}\label{sec:locsc}

\begin{theorem}\label{thm:SemiclassLoc}
Let $\theta$ be a bounded cutoff function supported on the unit ball $B(1)$
and $V$ a smooth potential on $B(1)$. Let $\lambda>0$ be fixed.
Assume that there is a constant $C'$ and
for any $n\in {\mathbb N}^3$ there is a constant $C_n$ such that
$$
|\partial^n V| \leq C_n,  \;\; \mbox{and} \quad \beta \leq C'h.
$$

Then
\begin{align}
\Big| &\inf_A \Big\{ 
\tr\big[ \theta \big\{ \sqrt{\beta^{-2} T_h(A) + \beta^{-4}} - \beta^{-2} - V \big\} \theta \big]_{-}
+ \frac{\lambda }{\beta^2 h^3} \int_{B(2)}|\nabla \otimes A|^2 \Big\} \nonumber \\
&- \frac{2}{(2\pi h)^3} \iint \theta(x)^2 \Big[ \frac{1}{2} p^2-V(x)\Big]_{-} dx dp \Big|
\leq C h^{-2+1/11},
\end{align}
where $C$ depends on $\lambda$ and on finitely many constants $C_n$.
\end{theorem}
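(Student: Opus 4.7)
Take the trial field $A=0$; the operator reduces to a function of $-h^2\Delta$. The scalar inequality $\sqrt{1+y}-1\le y/2$ together with monotonicity of $\tr[\,\cdot\,]_-$ dominates the trace by $\tr[\theta(-\tfrac{h^2}{2}\Delta-V)\theta]_-$, whose Weyl asymptotics are classical with error $O(h^{-2})$. Since $\beta\le C'h$, the relativistic Taylor correction $\sqrt{\beta^{-2}p^2+\beta^{-4}}-\beta^{-2}=\tfrac12 p^2+O(\beta^2p^4)$ shifts the Weyl integral by $O(\beta^2 h^{-3})=O(h^{-1})$, well within the target.

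\textbf{Lower bound: decoupling the magnetic field.} Let $A$ be a near-minimizer. Applying the scaled Lieb--Thirring inequality \eqref{LT2} to the localized trace (the effective potential $\theta^2 V$ is bounded in $L^{5/2}\cap L^4$ on the unit ball) and combining with the $O(h^{-3})$ upper bound on the infimum forces the a priori estimate $\int_{B(2)}|\nabla\otimes A|^2\le C\beta^2$. Using the Pauli identity $T_h(A)=(-ih\nabla+A)^2+h\sigma\cdot B$, the AM--GM inequalities $(-ih\nabla+A)^2\ge(1-\varepsilon)T_h(0)-\varepsilon^{-1}A^2$ and $h\sigma\cdot B\ge-h|B|$, and the substitution $\gamma=\beta(1-\varepsilon)^{-1/2}$, operator monotonicity of the square root together with the pull-out lemma (Lemma~\ref{lem:Pull-out}) yields
\[
  \sqrt{\beta^{-2}T_h(A)+\beta^{-4}}-\beta^{-2}\;\ge\;\sqrt{\gamma^{-2}T_h(0)+\gamma^{-4}}-\gamma^{-2}\;-\;\cE_{\rm mag},
\]
with $\cE_{\rm mag}=O(\varepsilon^{-1}A^2+h|B|)$. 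The negative part of $\cE_{\rm mag}$ is absorbed by borrowing a $\delta$-fraction of the kinetic energy and applying the magnetic-free Daubechies inequality (the $B=0$ case of Theorem~\ref{thm:LT}); combined with the a priori bound and the Sobolev embedding $\|A\|_6\le C\beta$ on $B(2)$, the absorption cost is a small positive power of $h$.

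\textbf{Non-magnetic semiclassics and optimization.} The remaining leading operator $\sqrt{\gamma^{-2}T_h(0)+\gamma^{-4}}-\gamma^{-2}-V$ is a non-magnetic relativistic operator, a function of $-h^2\Delta$, so standard local coherent-state semiclassics (following \cite{SSS}) produces the non-relativistic Weyl integral with error $O(h^{-2+1/11})$; the relativistic Taylor correction, the shift $\gamma\to\beta$ and the $(1-\delta)$ kinetic factor contribute additional errors $O(h^{-1})$, $O(\varepsilon h^{-3})$ and $O(\delta h^{-3})$ respectively, all subdominant once $\varepsilon,\delta$ are tuned as small fractional powers of $h$. The principal obstacle is propagating the non-magnetic exponent $1/11$ unchanged through the magnetic decoupling: the square root is operator-monotone but not jointly concave in several arguments, so extracting the magnetic perturbation cleanly requires the pull-out lemma combined with sharp Sobolev-type bounds from the a priori estimate, so that the magnetic absorption errors stay strictly below $h^{-2+1/11}$.
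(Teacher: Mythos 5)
Your upper bound and the broad outline of the lower bound (a priori bound on the field energy, decoupling the magnetic field by completing a square, non-magnetic semiclassics, parameter optimization) match the paper's strategy, but there is a genuine gap: you skip the \emph{intermediate-scale localization}, and without it the magnetic errors cannot be brought below $h^{-2+1/11}$.

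The paper introduces a partition of unity at scale $\ell$ (ultimately $\ell=h^{1/11}$), localizes the vector potential to balls of radius $O(\ell)$, and only then derives the a priori bound $\mathcal{B}^2:=\int_{B(2\ell)}|\nabla\otimes A|^2\le C\beta^2\ell^3$. The factor $\ell^3$ is exactly what the optimization over $\ell$ feeds on. Your global a priori bound $\int_{B(2)}|\nabla\otimes A|^2\le C\beta^2$ is correct, but when you try to absorb the $\epsilon^{-1}A^2$ (or $\epsilon^{-2}A^2$) term via CLR/Sobolev and trade it against the field energy $\frac{\lambda}{\beta^2h^3}\mathcal{B}^2$ plus a small fraction $\epsilon$ of the kinetic energy, a short computation shows that the balancing requires $\epsilon\gtrsim h^{-1/2}$, which is impossible since $\epsilon<1$. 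In other words, the volume factor $\ell^{1/2}$ (from Sobolev on a ball of radius $O(\ell)$) and the $\ell^3$ in the a priori bound are not cosmetic: they are what makes the magnetic absorption error of size $(h/\ell)^{-3}\ell^{16/5}h^{4/5}$ per ball, which after summing over $\sim\ell^{-3}$ balls and optimizing $\ell=h^{1/11}$ matches the non-magnetic semiclassical error $(h/\ell)^{-9/5}\cdot\ell^{-3}$, both giving $h^{-21/11}=h^{-2+1/11}$. Your phrase ``the absorption cost is a small positive power of $h$'' is precisely where the argument breaks down quantitatively.

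There is also a secondary inefficiency. You decompose $T_h(A)=(-ih\nabla+A)^2+h\sigma\cdot B$ and must then deal with $-h|B|$ as an extra term. The paper instead applies the numerical inequality $(a+b)^2\ge(1-\epsilon)a^2-\epsilon^{-1}b^2$ directly to $\sigma\cdot(-ih\nabla)$ and $\sigma\cdot A$ inside $[\sigma\cdot(-ih\nabla+A)]^2$, using $[\sigma\cdot A]^2=A^2$, which removes the cross term without ever producing an explicit $\sigma\cdot B$ contribution. This is cleaner and avoids a second source of magnetic error that you would then need to control.

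Finally, a small misstatement: \cite[Theorem~4.1]{SSS}, applied on a ball of radius $\ell$, yields a local semiclassical error of order $(h/\ell)^{-9/5}$, not $h^{-2+1/11}$; the latter exponent only emerges after the summation over $\sim\ell^{-3}$ balls and the optimization in $\ell$ that you have omitted.
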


\begin{proof}
The upper bound follows by choosing $A=0$ and applying \cite[Theorem~4.1]{SSS}.
Also, using Theorem~\ref{thm:LT}, it suffices to prove the estimate when $h$ is sufficiently small,
 say $h < 10^{-11}$.

Let $\ell$ be a parameter satisfying $h\leq \ell \leq 1/10$. At the end of
 the proof we will choose $\ell = h^{1/11}$.
Let $\{ \phi_{j,\ell}\}_{j \in {\mathbb Z}^3}$ be a collection of smooth functions satisfying
\begin{align*}
\sum_j \phi_{j,\ell}^2 = 1, \qquad
\supp \phi_{j,\ell} \subset B_{j\ell} (\ell), \qquad
\sum_j |\nabla \phi_j|^2 \leq C \ell^{-2},
\end{align*}
where $B_x(r)$ is the ball of radius $r$ centered at $x$.
Then, by the IMS-formula
\begin{align*}
T_h(A) \geq \sum_j  \phi_{j,\ell} \big( T_h(A) - C h^2\ell^{-2} \big) \phi_{j,\ell} .
\end{align*}
So by the pull-out estimate of Lemma~\ref{lem:Pull-out} and operator monotonicity of the square root,
\begin{align}
\tr\Big[ &\theta \big\{ \sqrt{\beta^{-2} T_h(A) + \beta^{-4}} - \beta^{-2} - V \big\} \theta \Big]_{-} \\
&\geq \nonumber
\sum_j 
\tr\Big[ \theta \phi_{j,\ell} \big\{ \sqrt{\beta^{-2} T_h(A)
 -C \beta^{-2} h^2 \ell^{-2} + \beta^{-4}} - \beta^{-2} - V \big\} \phi_{j,\ell} \theta \Big]_{-}.
\end{align}
Also, with some universal constant $c_0$
$$
\sum_{j: B_{j\ell}(\ell) \cap B(1) \neq \emptyset} \int_{B_{j\ell}(2\ell)} |\nabla \otimes A|^2
\leq c_0  \int_{B(2)} |\nabla \otimes A|^2,
$$
so
\begin{align}\label{eq:AfterPullOut}
\tr&\Big[ \theta \big\{ \sqrt{\beta^{-2} T_h(A) + \beta^{-4}} - \beta^{-2} - V\big\} \theta \Big]_{-} 
+ \frac{\lambda}{\beta^2 h^3} \int_{B(2)} |\nabla \otimes A|^2
\nonumber \\
&\geq
\sum_j \Big\{ 
\tr\Big[ \theta \phi_{j,\ell} \big\{ \sqrt{\beta^{-2} T_h(A) -C \beta^{-2} h^2 \ell^{-2} + \beta^{-4}}
 - \beta^{-2} - V \big\} \phi_{j,\ell} \theta \Big]_{-}\nonumber \\
&\qquad+ \frac{\lambda}{c_0 \beta^2 h^3} \int_{B_{j\ell}(2\ell)} |\nabla \otimes A|^2\Big\}.
\end{align}
We may consider each summand independently and we can focus only 
on those that give negative contribution.
 For simplicity of notation, we take $j=0$. 
Choose a new, smooth partition of unity $\chi_1^2 + \chi_2^2=1$, with
$$
\supp \chi_1 \subset B(2\ell), \qquad \chi_1 = 1 \text{ on } \supp \phi_{0,\ell}, \qquad
|\nabla \chi_1|^2 + |\nabla \chi_2|^2 \leq C \ell^{-2}.
$$

By a constant shift in $A$ and gauge invariance we may assume that $\int_{B(2\ell)} A = 0$.
 Choose $\wt A = \wt \chi A$, with $\wt \chi$ satisfying the same conditions
 as $\chi_1$ and $\wt \chi \chi_1 = \chi_1$. 
Define $W_{h,\ell} = h^2 (|\nabla \chi_1|^2 + |\nabla \chi_2|^2)$.

Then, by IMS and the pull-out estimate again
\begin{align}
 \phi_{j,\ell} &\sqrt{\beta^{-2} T_h(A) -C \beta^{-2} h^2 \ell^{-2} + \beta^{-4}}  \phi_{j,\ell} 
\nonumber \\
& =
  \phi_{j,\ell} \sqrt{\beta^{-2} (\chi_1 T_h(\wt A) \chi_1+ \chi_2 T_h(A) \chi_2 - W_{h,\ell} ) -C \beta^{-2} h^2 \ell^{-2} + \beta^{-4}}  \phi_{j,\ell} \nonumber \\
&\geq
\phi_{j,\ell} \sqrt{\beta^{-2} T_h(\wt A) -C' \beta^{-2} h^2 \ell^{-2} + \beta^{-4}}  \phi_{j,\ell} 
\label{usepullout}
\end{align}
with a different constant $C'$, where we used that $\chi_1\phi_{j,\ell}=\phi_{j,\ell}$.

Using the Poincare inequality, we also have
\begin{align}
\int_{{\mathbb R}^3} |\nabla \otimes \wt A|^2 &= \int_{B(2\ell)} |\nabla \otimes \wt A|^2 \leq
 \int_{B(2\ell)} \wt \chi^2 |\nabla \otimes A|^2 + 2|\nabla \wt \chi|^2 A^2 \nonumber \\
& \leq C_1  \int_{B(2\ell)} |\nabla \otimes A|^2
\label{usepoincare}
\end{align}
for some universal constant $C_1$.

So for each $j$, it suffices to consider a semiclassical lower bound to
\begin{align}
\inf_A \, & \Bigg\{\tr\Big[ \theta \phi_{j,\ell} \big\{ \sqrt{\beta^{-2} T_h(A)
 -C \beta^{-2} h^2 \ell^{-2} + \beta^{-4}} - \beta^{-2} - V\big\} \phi_{j,\ell} \theta \Big]_{-} \nonumber \\
&\qquad +
 \frac{\lambda}{c_0 \beta^2 h^3} \int_{B_{j\ell}(2\ell)} |\nabla \otimes A|^2 \Bigg\},
\label{eachj}
\end{align}
where $c_0$ is a given fixed constant, and the infimum is taken
 over all vector fields $A$ with support contained in $B_{j\ell}(2\ell)$.

First we get a crude upper bound on the field energy.
Clearly
$$
  \sqrt{\beta^{-2} T_h(A) -C \beta^{-2} h^2 \ell^{-2} + \beta^{-4}} - \beta^{-2}
 \ge \sqrt{\wt\beta^{-2} T_h(A) + \wt\beta^{-4}} - \wt\beta^{-2} - Ch^2\ell^{-2}
$$
with  $\wt \beta$ defined by
$$
    \wt\beta^{-4} = \beta^{-4} -C \beta^{-2} h^2 \ell^{-2}.
$$
We can apply the Lieb-Thirring inequality Theorem~\ref{thm:LT}
for the first line of \eqref{eachj}
 with the bounded potential $V(x)+Ch^2\ell^{-2}{\bf 1}(x\in \mbox{supp } \phi_{j,\ell})$
to obtain a lower bound of order $- Ch^{-3}\ell^3$. This implies that
\begin{align}\label{eq:apriori}
{\mathcal B}^2 := \int_{B(2\ell)} |\nabla \otimes A|^2 \leq C\beta^2 \ell^3,
\end{align}
with a large constant $C$
whenever the vector field $A$ gives a non-positive energy.

\medskip

We now estimate, for any $\epsilon\in (0,1)$,
\begin{align}\label{eq:Diamagnetic2}
T_h(A) &\geq -(1-2\epsilon) h^2 \Delta + \epsilon( -h^2\Delta - \epsilon^{-2} A^2) \nonumber \\
&\geq -(1-2\epsilon) h^2 \Delta - C h^{-3} \epsilon^{-4} \ell^{1/2} {\mathcal B}^5.
\end{align}
Here the last inequality follows from the Lieb-Thirring, H\"{o}lder and Sobolev inequalities
recalling that $A$ is supported in $B_{j\ell}(2\ell)$:
\begin{align*}
 -h^2\Delta - \epsilon^{-2} A^2 &\geq -C h^{-3} \epsilon^{-5} \int A^5 \geq
 -C h^{-3} \epsilon^{-5} \Big(\int_{B(2\ell)}1 \Big)^{1/6} \Big(\int A^6 \Big)^{5/6} \\
& \geq - C  h^{-3} \epsilon^{-5} \ell^{1/2}\Big( \int |\nabla \otimes A|^2\Big)^{5/2}.
\end{align*}
Define 
\begin{align}
\gamma^{-4} := \beta^{-4} - C\beta^{-2} h^2 \ell^{-2} -
 C \beta^{-2} h^{-3} \epsilon^{-4} \ell^{1/2} {\mathcal B}^5,\qquad
\tilde{h} = h \sqrt{1-2\epsilon}.
\end{align}
We will in the end make the (optimal) choice
\begin{align}\label{eq:Choice}
\epsilon = h^{-3/5} \ell^{1/10} {\mathcal B}.
\end{align}
Using \eqref{eq:apriori}, this choice will ensure that 
$$
\epsilon \ll 1, \qquad 
h^{-3} \epsilon^{-4}  \ell^{1/2} {\mathcal B}^5 =  h^{-3/5} \ell^{1/10} {\mathcal B} \ll 1,
$$
so $\gamma$ is well defined.
Using operator monotonicity of the square root  we get
from \eqref{eq:Diamagnetic2} that
\begin{align}\label{diam3}
&\sqrt{\beta^{-2} T_h(A) + \beta^{-4}} - \beta^{-2} - V(x)  \\
&\quad\geq
\sqrt{\gamma^{-2} (-\tilde{h}^2 \Delta) + \gamma^{-4}} - \beta^{-2} - V(x) \nonumber\\
&\quad\geq
\sqrt{\gamma^{-2} (-\tilde{h}^2 \Delta) + \gamma^{-4}} - \gamma^{-2} - (V(x)+ C  h^{-3} \epsilon^{-4} \ell^{1/2}{\mathcal B}^5 + C h^2 \ell^{-2}). \nonumber 
\end{align}
Using \cite[Theorem~4.1]{SSS} we therefore have
\begin{align}
&\tr\Big[ \theta \phi_{0,\ell} \big\{ \sqrt{\beta^{-2} T_h(A) + \beta^{-4}} - \beta^{-2}
 - V(x) \big\}  \phi_{0,\ell} \theta \Big]_{-}
+ \frac{\lambda}{c_0\beta^2 h^3} \int_{B(2\ell)}|\nabla \otimes A|^2 \nonumber \\
&\quad\geq
 \frac{2}{(2\pi h)^3} \iint \theta(x)^2  \phi_{0,\ell}^2(x)\Big[
 \sqrt{\beta^{-2} p^2 + \beta^{-4}} - \beta^{-2}-V(x)\Big]_{-} dx dp - C(h/\ell)^{-9/5} \nonumber \\
 &\qquad
 - C (h/\ell)^{-3} \big( \epsilon + h^{-3} \epsilon^{-4} \ell^{1/2} {\mathcal B}^5 \big)
 + \beta^{-2} h^{-3} {\mathcal B}^2.
\end{align}
The leading semiclassical term is of order $(h/\ell)^3$.
The
$\epsilon$ term in the second line comes from adjusting $\wt h$ to $h$
in the main term and
we have absorbed the term $C (h/\ell)^{-3} h^2 \ell^{-2}$ 
(from the last line of \eqref{diam3}) in $(h/\ell)^{-9/5}$.
In the leading  term we can replace 
$ \sqrt{\beta^{-2} p^2 + \beta^{-4}} - \beta^{-2}$ with $\frac{1}{2}p^2$ at
the expense of a $\beta^2(h/\ell)^{-3}$ error.

By the choice of $\epsilon$ above, the last line is
$$
-Ch^{-3} \ell^{31/10} h^{-3/5} {\mathcal B} + \beta^{-2} h^{-3} {\mathcal B}^2.
$$
Clearly,
$$
Ch^{-3} \ell^{31/10} h^{-3/5} {\mathcal B} \leq \beta^{-2} h^{-3} {\mathcal B}^2 +
C^2h^{-3} \ell^{31/5} \beta^2 h^{-6/5}
\leq \beta^{-2} h^{-3} {\mathcal B}^2 + C^2h^{-3} (\ell^{31/5} h^{4/5}),
$$
using the bound $\beta \leq C'h$.
So we get
\begin{align}\label{eq:LocalSemiclassFinal}
&\tr\Big[ \theta  \phi_{0,\ell} \big\{ \sqrt{\beta^{-2} T_h(A) + \beta^{-4}} - \beta^{-2}
 - V(x) \big\}  \phi_{0,\ell} \theta \Big]_{-}
+ \frac{\lambda}{c_0 \beta^2 h^3} \int_{B(2\ell)}|\nabla \otimes A|^2 \nonumber \\
&\quad\geq
 \frac{2}{(2\pi h)^3} \iint \theta(x)^2  \phi_{0,\ell}^2(x)\Big[\frac{1}{2} p^2-V(x)\Big]_{-} dx dp 
- C(h/\ell)^{-9/5} \nonumber \\
 &\qquad
 - C (h/\ell)^{-3} (\beta^2+\ell^{16/5} h^{4/5}).
\end{align}
With the choice $\ell = h^{1/11}$, we have
$$
(h/\ell)^{-9/5} + (h/\ell)^{-3} (\ell^{16/5} h^{4/5}) = 2  (h/\ell)^{-3} h^{12/11}
$$
and the error term from $\beta^2$ is negligible since $\beta\le C'h$.

Similar bound holds for any $j$.  We proceed to insert \eqref{eq:LocalSemiclassFinal}
for each $j$ in \eqref{eq:AfterPullOut} and get
\begin{align}
\tr&\Big[ \theta \big\{ \sqrt{\beta^{-2} T_h(A) + \beta^{-4}} - \beta^{-2} - V(x) \big\} \theta \Big]_{-} 
+ \frac{\lambda}{c_0\beta^2 h^3} \int_{B(2)} |\nabla \otimes A|^2
\nonumber \\
&\geq  \frac{2}{(2\pi h)^3} \iint \theta(x)^2  \Big[\frac{1}{2} p^2-V(x)\Big]_{-} dx dp  - C h^{-3+12/11}.
\end{align}
Here we used that the summation in \eqref{eq:AfterPullOut} can be restricted to those $j$, 
where the ball $B_{j\ell}(\ell)$ has non-empty intersection with $B(1)$. By a volume argument 
there are of order of magnitude $\ell^{-3}$ such balls.

\end{proof}

\section{Proof of the relativistic Lieb-Thirring inequalities}\label{sec:proof}

\begin{proof}[Proof of Theorem~\ref{thm:LT}.]
By scaling it suffices to prove that there exists a constant $C>0$ such that for all $m\geq 0$,
\begin{align}\label{eq:noscale}
\tr\Big[& \sqrt{ T(A) + m^2} - m - V(x)\Big]_{-} \nonumber \\
&\geq - C \Big\{ m^{3/2} \int [V]_{+}^{5/2} + \int [V]_{+}^{4} + \Big( \int |\nabla\times A|^2 \Big)^{3/4}
 \Big( \int [V]_{+}^4\Big)^{1/4} \Big\} .
\end{align}
The basic idea is to consider the spectral subspaces on which $T=T(A)\le 10m^2$ and
its complement, $T\ge 10m^2$, separately.  For any $T\ge 0$ and $m\ge  0$ we have
the following simple arithmetic
inequalities:
\begin{align}
    \sqrt{T+m^2}- m \ge  & \; \frac{c_0T}{m}, \qquad \mbox{if $T< 10m^2$}\label{TT}\\
   \sqrt{T+m^2}- m \ge  & \; \frac{2}{3}\sqrt{T} , \qquad \mbox{if $T\ge 10m^2$} \nonumber
\end{align}
where $c_0$ is a universal constant (actually $c_0=\frac{1}{10}(\sqrt{11}-1)$ will do).
In the first regime we can use the non-relativistic magnetic Lieb-Thirring inequality.
In the second regime we will use the BKS inequality \cite{BKS} stating that
\begin{equation}\label{eq:BKS}
 \tr (P-Q)_-\ge - \tr[-(P^2-Q^2)_-]^{1/2} 
\end{equation}
for any positive operators $P,Q$. In this way we can turn the problem 
in the second regime into a Lieb-Thirring type estimate on the half moments of the
negative eigenvalues of the
Pauli operator.

For the detailed proof, we can clearly assume that $V \geq 0$. 
We split the potential as
$$
   V = V_1+V_2, \qquad V_1(x): = V(x)\cdot {\bf 1}_{\{V(x)\le m/2\}}, \qquad  V_2(x): = V(x)\cdot {\bf 1}_{\{V(x)> m/2\}}.
$$
Since
\begin{align*}
\tr\Big[ \sqrt{T + m^2} - m - V_1(x)- V_2(x)\Big]_{-}\geq &
\; \frac{1}{2}\tr\Big[ \sqrt{T + m^2} - m - 2V_1(x)\Big]_{-}\\
& +
\frac{1}{2}\tr\Big[ \sqrt{T + m^2} - m - 2V_2(x)\Big]_{-},
\end{align*}
for the proof of \eqref{eq:noscale}
it suffices to consider separately the cases 
\begin{itemize}
\item $V(x) \le m$ for all $x$;
\item $V(x) > m$ whenever $V(x)\ne 0$.
\end{itemize}
The first case will be applied to $V$ being $2V_1$, while the second to $V$ being $2V_2$.

In the first case, $V \leq m$, consider the projections
$$
P_{<}  := {\bf 1}_{\{ T < 10 m^2\}}, \qquad P_{\geq} := {\bf 1}_{\{ T \geq 10 m^2\}},
$$
and estimate
$$
V = (P_{<} + P_{\geq}) V (P_{<} + P_{\geq}) \leq 2 P_{<} V P_{<} + 2 P_{\geq} V P_{\geq}. 
$$
Since $V \leq m$, we have
\begin{align}\label{eq:SqrtToPauli}
\sqrt{T + m^2}- m - V 
&\geq
P_{<} \Big( \sqrt{T + m^2}- m - 2V \Big) P_{<} +
P_{\geq} \Big( \sqrt{T + m^2}- m - 2V \Big) P_{\geq} \nonumber \\
&\geq
P_{<} \Big( m^{-1} c_0 T - 2V \Big) P_{<},
\end{align}
where we used the spectral theorem and the elementary inequality \eqref{TT}.

Therefore,
\begin{align}
\tr\Big[ \sqrt{T+ m^2} - m - V(x)\Big]_{-} 
&\geq
m^{-1} \tr\big[ c_0 T - 2m V]_{-}  \\
&\geq
-C \Big\{ m^{3/2} \int [V]_{+}^{5/2}  + \Big( \int B^2 \Big)^{3/4} \Big( \int [V]_{+}^4\Big)^{1/4} \Big\},\nonumber
\end{align}
where the last inequality follows by the Lieb-Thirring inequality for the Pauli operator.

\medskip

We now consider the case where $V(x) \geq m$ whenever $V(x) \ne 0$. Notice, that in 
the special case $m=0$ this condition is automatically satisfied.
Here we first use the BKS inequality \eqref{eq:BKS} to get
\begin{align}
\tr\Big[ \sqrt{T + m^2} - m - V(x)\Big]_{-}
&\geq
-\tr\Big(-\Big[ T+ m^2 - (m+V)^2 \Big]_{-}\Big)^{1/2} \nonumber \\
&\geq
-\tr\Big(-\Big[ T - 3V^2 \Big]_{-}\Big)^{1/2}.
\end{align}
Here we estimated $m V \leq V^2$ by the assumption on $V$ to get the last inequality.

We now use the ``running energy scale'' method from \cite{LLS}. 
For a self-adjoint operator $H$ let ${\mathcal N}(H)$ denote the dimension of the negative spectral subspace.
Let $\lambda \in [0,1]$ be a real parameter chosen at the end.
Using that $T\ge 0$ and $T\ge (p+A)^2 -  |\nabla\times A| $, we obtain
\begin{align}
 \tr\Big(-\Big[ T - 3V^2 \Big]_{-}\Big)^{1/2} 
&=
\int_0^{\infty} {\mathcal N}\big( T - 3V^2 + e \big) \frac{de}{\sqrt{e}}\nonumber\\
&\leq
\int_0^{\infty} {\mathcal N}\big( \lambda T - 3V^2 + e \big) \frac{de}{\sqrt{e}} \nonumber \\
&\leq
\int_0^{\infty} {\mathcal N}\big( \lambda(p+A)^2 - \lambda |\nabla\times A| - 3V^2 + e \big) \frac{de}{\sqrt{e}} .
\end{align}
We now apply the CLR-estimate to get
\begin{align}
\tr\Big(-\Big[T - 3V^2 \Big]_{-}\Big)^{1/2}
&\leq
C \int_0^{\infty} \int_{{\mathbb R}^3} \big[ |\nabla\times A|
 + 3 \lambda^{-1} V^2 - \lambda^{-1} e\big]_{+}^{3/2} dx  \frac{de}{\sqrt{e}} \nonumber \\
&\leq
C'\Big\{ \sqrt{\lambda} \int |\nabla\times A|^2 + \lambda^{-2} \int V^4\Big\}.
\end{align}
Setting $B=|\nabla\times A|$ for simplicity,
if $\int B^2 \leq \int V^4$, we choose $\lambda =1$ and get a total estimate
 of size $\int V^4$. If $\int B^2 \leq \int V^4$ we choose
 $\lambda = (\int V^4/\int B^2)^{1/2}$ and get an estimate of size 
$ \Big( \int B^2 \Big)^{3/4} \Big( \int V^4\Big)^{1/4}$.
Thus the choice $\lambda = \min\big\{ 1, (\int V^4/\int B^2)^{1/2}\big\}$ will do the job
in both cases.
This finishes the proof of \eqref{eq:noscale} and therefore of Theorem~\ref{thm:LT}.
\end{proof}

\begin{proof}[Proof of Theorem~\ref{thm:CritStability}.] A potential with
 Coulomb singularity is not allowed in Theorem~\ref{thm:LT}. We will need to
use Kato's inequality to control the Coulomb singularity directly; the remaining part 
of the potential will be treated as in the proof of Theorem~\ref{thm:LT}. 
This will be done in  Lemma~\ref{thm:CritStability2}. However, this estimate
does not have the expected behavior for small values of $\beta$, where
one should be close to the non-relativistic situation. So  Lemma~\ref{thm:CritStability2}
 will be used only if $\beta$ is separated away
from zero, say $\beta\ge 1/20$. When $\beta \in (0, 1/20)$, we can
estimate $(\beta^{-2}T+\beta^{-4})^{1/2}-\beta^{-2}$ by $\mbox{(const.)} T$
effectively, and we use a result from \cite{ES2} to ``pull the Coulomb tooth''.
This is the content of Lemma~\ref{lem:smallbeta} below.

\begin{lemma}[Stability up to the critical coupling]\label{thm:CritStability2}
Let $r \geq r_0$ for some given $r_0>0$.
Let $\phi_r$ be a real function satisfying $\supp \phi_r \subset \{|x|\leq r\}$, $\| \phi_r \|_{\infty} \leq 1$.
There exists a constant $C>0$, depending only on $r_0$, such that if $\beta \in (0,2/\pi)$, then
\begin{align}\label{eq:stability}
&\tr \Big[ \phi_r \Big( \sqrt{\beta^{-2} T(A) + \beta^{-4}} - \beta^{-2} - \frac{1}{|x|}- V \Big) \phi_r \Big]_{-} \nonumber \\
&\geq
- C \Bigg\{ \eta^{-3/2} \beta^{-1}\int |\nabla \times A|^2 + \beta^{-5} \eta^{-3} r^3 + \eta^{-3/2} \int [V]_+^{5/2} 
+ \eta^{-3} \beta^3 \int [V]_+^4 \nonumber \\
&\qquad\qquad+ \Big(\int |\nabla \times A|^2 \Big)^{3/4} \Big[  \beta^{-1/2} r^{3/4} +
 \Big(\int [V]_+^4\Big)^{1/4} \Big] \Bigg\},
\end{align}
where $\eta = \frac{1}{10}(1-(\pi\beta/2)^2)$.
\end{lemma}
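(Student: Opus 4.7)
The plan is to peel off the Coulomb singularity $1/|x|$ using Kato's inequality together with the subcritical gap $\beta<2/\pi$ (quantified by $\eta$), and then to apply Theorem~\ref{thm:LT} to the remainder in order to handle $V$. Writing $S:=\sqrt{\beta^{-2}T(A)+\beta^{-4}}-\beta^{-2}$, I would split
\[
S \;=\; \nu S \;+\; (1-\nu)S, \qquad \nu := \pi\beta/2 + 2\eta \;<\; 1,
\]
with the first piece designed to absorb $1/|x|$ (modulo a Pauli-magnetic error) and the second piece to feed Theorem~\ref{thm:LT}.

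For the Coulomb absorption, I would reduce from the Pauli to the magnetic Schr\"odinger setting via the pointwise spin bound $T(A) = (p+A)^2-\sigma\cdot B \ge (p+A)^2-|B|$ (using $|\sigma\cdot B|\le|B|$ on $\C^2$). Operator monotonicity of the square root then gives $\sqrt{\beta^{-2}T(A)+\beta^{-4}}\ge\sqrt{\beta^{-2}(p+A)^2+\beta^{-4}}-E(B)$, where the Pauli error $E(B)$ is controlled by $|B|$ and vanishes at $B=0$. The diamagnetic-Kato inequality $\sqrt{(p+A)^2}\ge 2/(\pi|x|)$ (obtained by subordination from the diamagnetic heat-kernel bound), together with the trivial $\sqrt{\beta^{-2}(p+A)^2+\beta^{-4}}\ge\beta^{-1}\sqrt{(p+A)^2}$, then yields $\nu\sqrt{\beta^{-2}(p+A)^2+\beta^{-4}}\ge 2\nu/(\pi\beta|x|)\ge 1/|x|$ by the choice of $\nu$. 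The $2\eta$-margin and the mass shift $-\nu\beta^{-2}$ are collected into a bounded potential $W\sim \beta^{-2}{\bf 1}_{\{|x|\le r\}}$ supported on the ball of radius $r$, while the Pauli error $E(B)$ is dispatched by a direct form estimate and produces the stand-alone term $\eta^{-3/2}\beta^{-1}\int|\nabla\times A|^2$.

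For the remainder, I would view $(1-\nu)S$ as $\sqrt{\tilde\beta^{-2}T(A)+\tilde\beta^{-4}}-\tilde\beta^{-2}$ (up to an additive constant shift) with rescaled $\tilde\beta=\beta/(1-\nu)\sim\beta/\eta$, and apply Theorem~\ref{thm:LT} to $\phi_r((1-\nu)S-V-W)\phi_r$. Plugging $\tilde\beta$ into the three Lieb-Thirring contributions of that theorem produces the terms $\eta^{-3/2}\int[V]_+^{5/2}$, $\eta^{-3}\beta^3\int[V]_+^4$, and $(\int B^2)^{3/4}(\int[V]_+^4)^{1/4}$. Substituting the bounded potential $W\sim\beta^{-2}{\bf 1}_{\{|x|\le r\}}$ into the second and third LT-terms of Theorem~\ref{thm:LT} produces the volume error $\beta^{-5}\eta^{-3}r^3$ (from $\tilde\beta^{3}\int W^4$) and the mixed error $(\int B^2)^{3/4}\beta^{-1/2}r^{3/4}$ (from $(\int B^2)^{3/4}(\int W^4)^{1/4}$) respectively, matching the right-hand side of \eqref{eq:stability}.

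The principal obstacle is the Coulomb absorption step for the Pauli operator, since the naive Kato-type inequality $\sqrt{T(A)}\ge 2/(\pi|x|)$ fails in general because $\sigma\cdot B$ can take both signs; the reduction to $(p+A)^2$ via operator monotonicity therefore generates a $|B|$-correction $E(B)$ that must be controlled without consuming the $\eta$-margin provided by subcriticality. Ensuring that the argument of the square root remains nonnegative so that monotonicity applies cleanly (which may require first adding and subtracting a large multiple of $|B|$ or exploiting the $\beta^{-4}$ mass cushion), and bookkeeping all the resulting error terms so that they fit inside \eqref{eq:stability} with the stated $\eta$-powers, is where the argument is most delicate.
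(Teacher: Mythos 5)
Your overall strategy (peel off the Coulomb singularity with a Kato-type inequality and feed the remainder into Theorem~\ref{thm:LT}) is the right one, and your bookkeeping of how the $\eta$-powers and the $\beta^{-2}{\bf 1}_{\{|x|\leq r\}}$ term should produce the right-hand side of \eqref{eq:stability} is essentially correct. However, the step you yourself flag as ``the principal obstacle'' is a genuine gap, and the proposal does not close it. You claim that operator monotonicity yields $\sqrt{\beta^{-2}T(A)+\beta^{-4}}\ge\sqrt{\beta^{-2}(p+A)^2+\beta^{-4}}-E(B)$ with $E(B)$ ``controlled by $|B|$''. Operator monotonicity applied to $T(A)\ge(p+A)^2-|B|$ only gives $\sqrt{\beta^{-2}T(A)+\beta^{-4}}\ge\sqrt{\beta^{-2}(p+A)^2-\beta^{-2}|B|+\beta^{-4}}$, and there is no operator inequality that lets you pull the $|B|$ out of the square root as an additive error. (The scalar inequality $\sqrt{a-b}\ge\sqrt a-\sqrt b$ does not survive as an operator inequality in the form you need, and even if it did, $\sqrt{\beta^{-1}|B|^{1/2}}$ is not the $|B|$-linear quantity your later estimate requires.) Until $E(B)$ is defined and bounded, the Coulomb absorption step does not go through.

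The paper resolves exactly this difficulty by reversing the order: it applies the BKS inequality \eqref{eq:BKS} \emph{before} invoking the Pauli-to-Schr\"odinger bound. Concretely, after splitting off $(1-\eta)\sqrt{\beta^{-2}T(A)}$ and using $(2/\pi)|x|^{-1}\le|{-i\nabla+A}|$, BKS converts the trace of the negative part of a difference of fractional powers into (minus) the trace of the square root of the negative part of a difference of \emph{full} powers, namely of $(1-\eta)^2\beta^{-2}T(A)-(\pi/2)^2(-i\nabla+A)^2$. Only at that point is $T(A)\ge(-i\nabla+A)^2-|\nabla\times A|$ inserted; the spin error is then a bounded perturbation at the level of a Schr\"odinger operator, the subcritical margin delivers a coefficient $8\eta$ in front of $(-i\nabla+A)^2$, and the non-relativistic Lieb--Thirring inequality for half-moments produces the $\eta^{-3/2}\beta^{-1}\int|\nabla\times A|^2$ term directly. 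Your split with $\nu=\pi\beta/2+2\eta$ and the treatment of the remainder via Theorem~\ref{thm:LT} with $\tilde\beta\sim\beta/\eta$ are fine modulo constants, but to make the Coulomb step rigorous you should adopt the BKS-first order of operations rather than attempt to extract a Pauli error from under the square root.
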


\begin{proof}[Proof of Lemma~\ref{thm:CritStability2}] Without loss of generality we can assume that $V\ge 0$.
We estimate
\begin{align}\label{eq:split}
\phi_r &\big[ \sqrt{\beta^{-2} T(A) + \beta^{-4}} - \beta^{-2} - \frac{1}{|x|}- V \big] \phi_r  \\
&\geq
\phi_r \big[ (1-\eta) \sqrt{\beta^{-2} T(A)} - \frac{1}{|x|} \big] \phi_r +
\phi_r \big[ \eta \sqrt{\beta^{-2} T(A) + \beta^{-4}} - \beta^{-2}{\bf 1}_{\{|x|\leq r\}} - V \big] \phi_r. \nonumber
\end{align}
In the first term we use the Kato inequality $(2/\pi)/|x| \leq |(-i\nabla +A)|$ and
 the BKS inequality \eqref{eq:BKS} to get
\begin{align}
\tr \Big[\phi_r \big( (1-\eta)& \sqrt{\beta^{-2} T(A)} - \frac{1}{|x|} \big) \phi_r\Big]_{-} \nonumber \\
&\geq
\tr\Big[ \phi_r \big( (1-\eta) \sqrt{\beta^{-2} T(A)} - (\pi/2) |(-i\nabla +A)| \big) \phi_r \Big]_{-} \nonumber \\
&\geq
\tr\Big[ (1-\eta) \sqrt{\beta^{-2} T(A)} - (\pi/2) |(-i\nabla +A)|  \Big]_{-} \nonumber \\
&\geq 
- \tr\Big(- \Big[ (1-\eta)^2 \beta^{-2} T(A) - (\pi/2)^2 |(-i\nabla +A)|^2 \Big]_- \Big)^{1/2}
 \nonumber \\
&  \ge - \beta^{-1} \tr \Big( - \big[ \big(  (1-\eta)^2 - (\beta \pi/2)^2\big) (-i\nabla +A)^2 - 
(1-\eta)^2 |\nabla \times A|\big]_- \Big)^{1/2}
 \nonumber \\
&\geq
- \beta^{-1} \tr \Big( - \big[ 8 \eta (-i\nabla +A)^2 - |\nabla \times A|\big]_- \Big)^{1/2}
 \nonumber \\
&\geq -C 
\beta^{-1} \eta^{-3/2} \int  |\nabla \times A|^2, 
\label{katoBKS}
\end{align}
where we also used the trivial lower bound
for the Pauli operator,
$T(A)\ge (-i\nabla+A)^2 - |\nabla\times A|$, in the fourth line and the special choice of $\eta$ in
the fifth line.
The last inequality in \eqref{katoBKS} is the non-relativistic Lieb-Thirring inequality
for half moments of the negative eigenvalues.
 Note that the bound on this term is consistent with \eqref{eq:stability}.

We proceed to estimate the second term in \eqref{eq:split} using the Lieb-Thirring inequality of Theorem~\ref{thm:LT}.
\begin{align}
\tr&\Big[ \phi_r \big( \eta \sqrt{\beta^{-2} T(A) + \beta^{-4}} - \beta^{-2}{\bf 1}_{\{|x|\leq r\}}
 - V \big)\phi_r \Big]_{-} \nonumber \\
&\geq - C \eta \Bigg\{ 
\eta^{-5/2} \int ( \beta^{-2}1_{\{|x|\leq r\}} + V)^{5/2} + \beta^3 \eta^{-4} \int ( \beta^{-2}{\bf 1}_{\{|x|\leq r\}} + V)^4
\nonumber \\
&\qquad\qquad +
\Big( \int |\nabla \times A|^2 \Big)^{3/4} \eta^{-1} \Big( \int ( \beta^{-2}{\bf 1}_{\{|x|\leq r\}} + V)^4 \Big)^{1/4}
\Bigg\}.
\end{align}
Elementary calculations show that this is also consistent with \eqref{eq:stability}.
\end{proof}

Finally, the expected behaviour for small values of $\beta$ can be obtained by
 the following modified version of Lemma~\ref{thm:CritStability2}.

\begin{lemma}\label{lem:smallbeta} 
Suppose $\beta \in (0, 1/20)$, and $r_1>0$.  Then there exists
a constant $C$ depending only on $r_1$ such that for any $r\le r_1$ we have
\begin{align}
\tr\Big[\sqrt{\beta^{-2} T(A)  + \beta^{-4}}& - \beta^{-2} - \frac{1}{|x|} \cdot {\bf 1}_{\{|x|\leq r\}} - V \Big]_{-}
 \nonumber \\
& \geq - C \Big\{ 1 + \int [V]_+^{5/2} + \int [V]_+^4 + \int |\nabla \times A|^2 \Big\}.
\end{align}
\end{lemma}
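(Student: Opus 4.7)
The plan is to exploit the smallness of $\beta$ in order to reduce the relativistic kinetic energy to a non-relativistic Pauli operator, and then invoke a ``pulling the Coulomb tooth'' estimate from \cite{ES2} to handle the localized Coulomb singularity. This avoids the Kato--BKS route of Lemma~\ref{thm:CritStability2}, which produces $\beta^{-1}$ factors and therefore cannot yield a bound that is uniform as $\beta\to 0$. The starting point is the elementary scalar inequality $\sqrt{1+y}-1\geq y/3$ valid for $y\in[0,2]$, which via the functional calculus gives
\begin{equation*}
\sqrt{\beta^{-2}T(A)+\beta^{-4}}-\beta^{-2} \;\geq\; \tfrac{1}{3}\,P_-\,T(A)\,P_- ,
\end{equation*}
where $P_-={\bf 1}(T(A)\leq 2\beta^{-2})$. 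Since $\beta\leq 1/20$ we have $2\beta^{-2}\geq 800$, so the complementary high-energy subspace $P_+$ sits far from the origin, and on $P_-$ we effectively deal with a non-relativistic Pauli operator with a universal (and in particular $\beta$-independent) prefactor.

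Using the spectral decomposition $P_-+P_+={\rm Id}$ together with a pull-out argument in the spirit of Lemma~\ref{lem:Pull-out}, I would split the original trace into a low-energy piece governed by $\tfrac{1}{3}T(A)$ and a high-energy piece. For the low-energy piece I would invoke the \cite{ES2} estimate in a form such as
\begin{equation*}
\tr\bigl[ c\,T(A)-\tfrac{1}{|x|}{\bf 1}_{\{|x|\leq r\}}-V\bigr]_- \;\geq\; -C\Bigl(1+\int[V]_+^{5/2}+\int[V]_+^{4}+\int|\nabla\times A|^2\Bigr),
\end{equation*}
valid for $r\leq r_1$ with $C$ depending only on $r_1$. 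This is precisely the target shape of the desired inequality and has the required $\beta$-independence. For the high-energy piece one has $\sqrt{\beta^{-2}T(A)+\beta^{-4}}-\beta^{-2}\geq c'\beta^{-1}\sqrt{T(A)}$, which dominates any bounded potential; since the Coulomb singularity has already been absorbed on $P_-$, a direct application of Theorem~\ref{thm:LT} to the residual $V$ closes the high-energy contribution.

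The main obstacle is that the spectral projections of $T(A)$ do not commute with $1/|x|$ or with $V$, so the spectral split cannot be carried out naively. One therefore has to work through the integral representation of the square root used in the proof of Lemma~\ref{lem:Pull-out}, and implement the partition $P_-+P_+$ at the level of the relativistic kinetic operator before pulling the factors out of the square root. Matching the exact form of the \cite{ES2} estimate with our target inequality, and in particular producing the additive constant $1$ on the right-hand side that encodes the contribution of the localized Coulomb singularity on balls of radius $r\leq r_1$, is the delicate technical point.
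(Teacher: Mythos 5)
Your overall outline is close to the paper's strategy — the paper also reduces to the Pauli operator via a spectral-projection trick at scale $T(A)\lesssim\beta^{-2}$ and then invokes the ``pull the Coulomb tooth'' estimate from \cite{ES2} — but there is a genuine gap at the point you already flag as ``the delicate technical point,'' and your sketch of its resolution does not work.

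The spectral split $V=(P_-+P_+)V(P_-+P_+)\leq 2P_-VP_-+2P_+VP_+$ followed by dropping the $P_+$ piece requires that $\sqrt{\beta^{-2}T(A)+\beta^{-4}}-\beta^{-2}-2V\geq 0$ on the range of $P_+$. That holds only if $V$ is \emph{bounded} by (a constant times) $\beta^{-2}$: on $P_+$ one has $\sqrt{\beta^{-2}T+\beta^{-4}}-\beta^{-2}\geq c\beta^{-2}$, and this dominates $2V$ provided $\|V\|_\infty\lesssim\beta^{-2}$. For the Coulomb potential $|x|^{-1}{\bf 1}_{\{|x|\leq r\}}$ this fails: multiplication by $|x|^{-1}$ is an operator that is unbounded on \emph{every} nonzero closed subspace of $L^2$, in particular on $P_+L^2$, so the $P_+$ block is not non-negative and cannot be discarded. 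It is a misconception that the Coulomb singularity has been ``absorbed on $P_-$''; the spectral decomposition of $T(A)$ does not localize a multiplication operator. The integral-representation route you mention addresses how to pull cutoff \emph{functions} past the square root, but it does not change the fact that $|x|^{-1}$ acts on both spectral sectors.

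The paper's resolution is a \emph{spatial} decomposition of the Coulomb potential prior to the spectral argument. It writes the Coulomb term as a part supported on $\{4\beta^2\leq|x|\leq r\}$, where $|x|^{-1}\leq\tfrac14\beta^{-2}$ is bounded by the rest-mass scale and your spectral-projection reduction to $cT(A)-C/|x|$ together with the \cite{ES2} bound goes through, plus a part supported on $\{|x|\leq 4\beta^2\}$, which is handled by a different mechanism altogether: the BKS inequality \eqref{eq:BKS} to square the kinetic energy, then $\beta\leq1/20$ to absorb the constant $\beta^{-4}$, the lower bound $T(A)\geq(-i\nabla+A)^2-|\nabla\times A|$, the Hardy inequality to control the resulting $|x|^{-2}$ singularity, and finally the non-relativistic Lieb--Thirring bound on half-moments to produce $\int|\nabla\times A|^2$. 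Without this inner-region argument your proof cannot close; adding the spatial split and the BKS/Hardy step for $|x|\leq 4\beta^2$ would make your approach coincide with the paper's.
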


\begin{proof}[Proof of Lemma~\ref{lem:smallbeta}] Assuming again $V\ge 0$, 
we first split the energy as follows
\begin{align}
\tr\Big[\sqrt{\beta^{-2} T(A) +  \beta^{-4}} & - \beta^{-2} - \frac{1}{|x|} {\bf 1}_{\{|x|\leq r\}} - V \Big]_{-}  \nonumber \\
&\geq
\frac{1}{2} \tr\Big[\sqrt{\beta^{-2} T(A) + \beta^{-4}} - \beta^{-2} - 2V \Big]_{-} \nonumber \\
&\quad+
\frac{1}{2}\tr\Big[\sqrt{\beta^{-2} T(A) + \beta^{-4}} - \beta^{-2} - \frac{2}{|x|} {\bf 1}_{\{|x|\leq r\}} \Big]_{-} .
\end{align}
The desired estimate for the term with $V$ follows from Theorem~\ref{thm:LT}, 
so it suffices to consider the second term with the Coulomb potential.
Following the proof of Theorem~\ref{thm:LT} 
we split this as
\begin{align}\label{eq:splitSmallLarge}
\tr\Big[\sqrt{\beta^{-2} T(A) + \beta^{-4}} & - \beta^{-2} - \frac{2}{|x|} {\bf 1}_{\{|x|\leq r\}} \Big]_{-} \nonumber \\
&\geq
\frac{1}{2}\tr\Big[\sqrt{\beta^{-2} T(A) + \beta^{-4}} - \beta^{-2} - 
\frac{4}{|x|} {\bf 1}_{\{4\beta^2 \leq |x|\leq r\}} \Big]_{-} \nonumber \\
&\quad + \frac{1}{2} \tr\Big[\sqrt{\beta^{-2} T(A) + \beta^{-4}} - \beta^{-2} 
- \frac{4}{|x|} {\bf 1}_{\{|x|\leq 4\beta^2\}} \Big]_{-}.
\end{align}
The first term on the right, where the potential is bounded by $\beta^{-2}$, we can
 estimate similarly to \eqref{eq:SqrtToPauli} as
\begin{align}
\tr\Big[\sqrt{\beta^{-2} T(A) + \beta^{-4}} - \beta^{-2} - \frac{4}{|x|} 1_{\{4\beta^2 \leq |x|\leq r\}} \Big]_{-} 
\geq
\tr\Big[ c_0 T(A) - \frac{8}{|x|} {\bf 1}_{\{|x|\leq r\}} \Big]_{-}.
\end{align}
This is a Pauli operator with a Coulomb singularity and 
is known to be bounded from below by a constant depending only on
the upper bound on $r$, see the (proof of)  Lemma 2.1 in \cite{ES2}
with the choice of $Z= 8c_1^{-1}$
(see also \cite[Equation (4.8)]{EFS3} with a special choice of the constants).

For the second term in \eqref{eq:splitSmallLarge} we use the BKS inequality
\eqref{eq:BKS} and estimate
\begin{align}
\tr\Big[& \sqrt{\beta^{-2} T(A) + \beta^{-4}}  - \beta^{-2} - \frac{4}{|x|} {\bf 1}_{\{|x|\leq 4\beta^2\}} \Big]_{-}  \\
&\geq - \tr\Big(- \Big[\beta^{-2} T(A) + \beta^{-4} - \big(\beta^{-2} +
\frac{4}{|x|} {\bf 1}_{\{|x|\leq 4\beta^2\}}\big)^2 \Big]_-\Big)^{1/2} \nonumber \\
&= - \tr\Big(- \Big[\beta^{-2} T(A) - \beta^{-2} \frac{8}{|x|} 1_{\{|x|\leq 4\beta^2\}} 
- \frac{16}{|x|^2} {\bf 1}_{\{|x|\leq 4\beta^2\}}\Big]_- \Big)^{1/2} \nonumber \\
&\geq - \tr\Big(-\Big[\beta^{-2} T(A) - \frac{48}{|x|^2} {\bf 1}_{\{|x|\leq 4\beta^2\}} 
\Big]_-\Big)^{1/2} \nonumber \\
&\geq - \tr\Big( - \Big[ 400\big\{ (-i\nabla+A)^2 - |\nabla \times A|\big\} - \frac{48}{|x|^2} \Big]_-\Big)^{1/2}, \nonumber 
\end{align}
where we used $\beta<1/20$, $T(A)\ge0$ and that $T(A)\ge  (-i\nabla+A)^2 - |\nabla \times A|$.
We now use the Hardy inequality, $(4|x|)^{-2}\le (-i\nabla+A)^2$  to continue this estimate as
\begin{align}
\geq - \tr \Big( -\Big[ 208 (-i\nabla+A)^2 - 400 |\nabla \times A| \Big]_-
\Big)^{1/2} \geq - C \int |\nabla \times A|^2
\end{align}
by the non-relativistic Lieb-Thirring inequality for the half moments.
 This finishes the proof of Lemma~\ref{lem:smallbeta}.
\end{proof}

As explained previously, the results of Lemma~\ref{lem:smallbeta} and Lemma~\ref{thm:CritStability2}
 combine to imply Theorem~\ref{thm:CritStability}. Therefore the proof of
 Theorem~\ref{thm:CritStability} is finished.
\end{proof}

\end{document}